\newcommand{\gray}[1]{\textcolor{gray}{#1}}
\newcommand{\mymathhl}[1]{\colorbox{gray!15}{$#1$}}
\newcommand{\method}{\textsc{LinkWaldo}\xspace}
\newcommand{\methodDeg}{\method-D\xspace}
\newcommand{\methodMulti}{\method-M\xspace}
\newcommand{\netmf}{\textsc{NetMF}\xspace}
\newcommand{\netmfOne}{\textsc{NetMF-1}\xspace}
\newcommand{\netmfTwo}{\textsc{NetMF-2}\xspace}
\newcommand{\bine}{\textsc{BiNE}\xspace}
\newcommand{\xnetmf}{\textsc{xNetMF}\xspace}
\newcommand{\groupingOne}{DG\xspace}
\newcommand{\groupingTwo}{SG\xspace}
\newcommand{\groupingThree}{CG\xspace}
\newcommand{\groupingFour}{MG\xspace}
\newcommand{\expectedVal}{\mathbb{E}}
\newcommand{\var}{Var}
\newcommand{\graph}{\mathcal{G}}
\newcommand{\numNodes}{n}
\newcommand{\numEdges}{m}
\newcommand{\adj}{\mathbf{A}}
\newcommand{\adjEl}{a}
\newcommand{\edges}{\mathcal{E}}
\newcommand{\newEdges}{\edges_{\text{new}}}
\newcommand{\nodes}{\mathcal{V}}
\newcommand{\neighbors}{\mathcal{N}}
\newcommand{\pairs}{\mathcal{P}}
\newcommand{\globalPool}{\tilde{\pairs}_G}
\newcommand{\budget}{k}
\newcommand{\toler}{\tau}
\newcommand{\bailoutTol}{\zeta}
\newcommand{\target}{\kappa} 
\newcommand{\targetMean}{\bar{\mu}}
\newcommand{\targetVar}{\sigma}
\newcommand{\numHashes}{b}
\newcommand{\numHashesMax}{\numHashes_{\max}}
\newcommand{\numTrees}{r}
\newcommand{\buckets}{\mathcal{B}}
\newcommand{\bucketVolume}{\text{Vol}}
\newcommand{\bucket}{\beta}
\newcommand{\hash}{h}
\newcommand{\concatHash}{g}
\newcommand{\rhHashFamily}{\mathcal{H}_{\text{rh}}}
\newcommand{\andHashFamily}{\mathcal{H}_{\text{and}}}
\newcommand{\randomHype}{\mathbf{r}_h}
\newcommand{\kl}{D_{\text{KL}}}
\newcommand{\totalErr}{\xi}
\newcommand{\testDist}{p_{\text{n}}}
\newcommand{\trainDist}{p_{\text{o}}}
\newcommand{\varDist}{d_{TV}}
\newcommand{\ourModel}{FLLM\xspace}
\newcommand{\sbm}{SBM\xspace}
\newcommand{\lpm}{LaPM\xspace}
\newcommand{\prm}{PM\xspace}
\newcommand{\simi}{sim} 
\newcommand{\mem}{\boldsymbol{\mu}}
\newcommand{\memele}{\mu}
\newcommand{\ind}{I}
\newcommand{\memER}{\sim_\ind}
\newcommand{\compat}{{\rho}} 
\newcommand{\group}{\mathcal{V}}
\newcommand{\grouping}{\Gamma}
\newcommand{\groupingWeight}{\mathbf{W}}
\newcommand{\groupingWeightEl}{w}
\newcommand{\memberPart}{\Pi}  
\newcommand{\memberPartLong}{\memberPart = \{\mathcal{C}_1, \allowbreak \mathcal{C}_2, \dots, \allowbreak \mathcal{C}_{|\memberPart|}\}}
\newcommand{\memberPartEl}{\mathcal{C}_i} 
\newcommand{\memberPartElArb}{\mathcal{C}} 
\newcommand{\decompX}{\mathcal{V}_u}
\newcommand{\decompY}{\mathcal{V}_v}
\newcommand{\dimension}{d}
\newcommand{\emb}{\mathbf{x}} 
\newcommand{\embs}{\mathbf{X}}
\newcommand{\reals}{\mathbb{R}}
\newcommand{\realsd}{\reals^\dimension}
\newcommand{\naturals}{\mathbb{N}}
\newcommand{\order}{O}
\newcommand{\pr}{\text{Pr}}
\newcommand{\complexityGroupings}{\gamma}
\newcommand{\sOne}{\textbf{S1}\xspace}
\newcommand{\sTwo}{\textbf{S2}\xspace}
\newcommand{\sThree}{\textbf{S3}\xspace}
\newcommand{\sFour}{\textbf{S4}\xspace}
\newtheorem{theorem}{Theorem}
\newtheorem{problem}{Problem}
\newtheorem{definition}{Definition}
\newtheorem{fact}{Fact}
\newtheorem{example}{Example}
\newcommand{\rOne}{\textbf{RQ1}}
\newcommand{\rTwo}{\textbf{RQ2}}
\newcommand{\rThree}{\textbf{RQ3}}
\newcommand{\cn}{\textsc{CN}\xspace}
\newcommand{\js}{\textsc{JS}\xspace}
\newcommand{\adad}{\textsc{AA}\xspace}
\newcommand{\ensemble}{\textsc{NMF+Bag}\xspace}
\newcommand{\yeast}{Yeast\xspace}
\newcommand{\dblp}{DBLP\xspace}
\newcommand{\arxiv}{arXiv\xspace}
\newcommand{\fbOne}{Facebook1\xspace}
\newcommand{\fbTwo}{Facebook2\xspace}
\newcommand{\movie}{MovieLens\xspace}
\newcommand{\protein}{HS-Protein\xspace}
\newcommand{\proteinAlt}{Protein-Soy\xspace}
\newcommand{\mathOverflow}{MathOverflow\xspace}
\newcommand{\enron}{Enron\xspace}
\newcommand{\reddit}{Reddit\xspace}
\newcommand{\epinions}{Epinions\xspace}
\newcommand{\digg}{Digg\xspace}
\newcommand{\yeastCite}{\yeast \cite{zhang2018link}\xspace}
\newcommand{\dblpCite}{\dblp \cite{kunegis2013konect}\xspace}
\newcommand{\arxivCite}{\arxiv \cite{snapnets}\xspace}
\newcommand{\fbOneCite}{\fbOne \cite{snapnets}\xspace}
\newcommand{\fbTwoCite}{\fbTwo \cite{kunegis2013konect}\xspace}
\newcommand{\movieCite}{\movie \cite{kunegis2013konect}\xspace}
\newcommand{\proteinCite}{\protein \cite{kunegis2013konect}\xspace}
\newcommand{\proteinAltCite}{\proteinAlt \cite{snapnets}\xspace}
\newcommand{\mathOverflowCite}{\mathOverflow \cite{snapnets}\xspace}
\newcommand{\enronCite}{\enron \cite{kunegis2013konect}\xspace}
\newcommand{\redditCite}{\reddit \cite{snapnets}\xspace}
\newcommand{\epinionsCite}{\epinions \cite{kunegis2013konect}\xspace}
\newcommand{\diggCite}{\digg \cite{rossi2015network}\xspace}
\begin{document}

\title{A Hidden Challenge of Link Prediction: \\ Which Pairs to Check?}

\author{
\IEEEauthorblockN{Caleb Belth}
\IEEEauthorblockA{\emph{University of Michigan} \\
Ann Arbor, MI, USA \\
cbelth@umich.edu}
\and
\IEEEauthorblockN{Alican B\"uy\"uk\c{c}ak{\i}r}
\IEEEauthorblockA{\emph{University of Michigan} \\ 
Ann Arbor, MI, USA \\ 
alicanb@umich.edu}
\and
\IEEEauthorblockN{Danai Koutra}
\IEEEauthorblockA{\emph{University of Michigan} \\ 
Ann Arbor, MI, USA \\ 
dkoutra@umich.edu}
}

\thispagestyle{plain}
\pagestyle{plain}

\maketitle

\begin{abstract}
The traditional setup of link prediction in networks assumes that a test set of node pairs, which is usually balanced, is available over which to predict the presence of links. However, in practice, there is no test set: the ground-truth is not known, so the number of possible pairs to predict over is quadratic in the number of nodes in the graph. Moreover, because graphs are sparse, most of these possible pairs will not be links. Thus, link prediction methods, which often rely on proximity-preserving embeddings or heuristic notions of node similarity, face a vast search space, with many pairs that are in close proximity, but that should not be linked. To mitigate this issue, we introduce \method, a framework for choosing from this quadratic, massively-skewed search space of node pairs, a concise set of candidate pairs that, in addition to being in close proximity, also \emph{structurally resemble the observed edges}. This allows it to ignore some high-proximity but low-resemblance pairs, and also identify high-resemblance, lower-proximity pairs. Our framework is built on a model that theoretically combines Stochastic Block Models (SBMs) with node proximity models. The block structure of the SBM maps out \emph{where} in the search space new links are expected to fall, and the proximity identifies the most plausible links within these blocks, using locality sensitive hashing to avoid expensive exhaustive search. \method can use any node representation learning or heuristic definition of proximity, and can generate candidate pairs for any link prediction method, allowing the representation power of current and future methods to be realized for link prediction \emph{in practice}. We evaluate \method on 13 networks across multiple domains, and show that on average it returns candidate sets containing 7-33\% more missing and future links than both embedding-based and heuristic baselines' sets.
\end{abstract}

\section{Introduction}
\label{sec:intro}

Link prediction is a long-studied problem that attempts to predict either missing links in an incomplete graph, or links that are likely to form in the future. This has applications in discovering unknown protein interactions to speed up the  discovery of new drugs, friend recommendation in social networks, knowledge graph completion, and more \cite{adamic2003friends, liben2007link, martinez2016survey,safavi2020evaluating}. Techniques range from heuristics, such as predicting links based on the number of common neighbors between a pair of nodes, to machine learning techniques,
which formulate the link prediction problem as a binary classification problem over node pairs \cite{hamilton2017representation, zhang2018link}. 

\begin{figure}[t]
    \centering
    \includegraphics[width=\columnwidth, trim=0 0 2.5cm 0, clip]{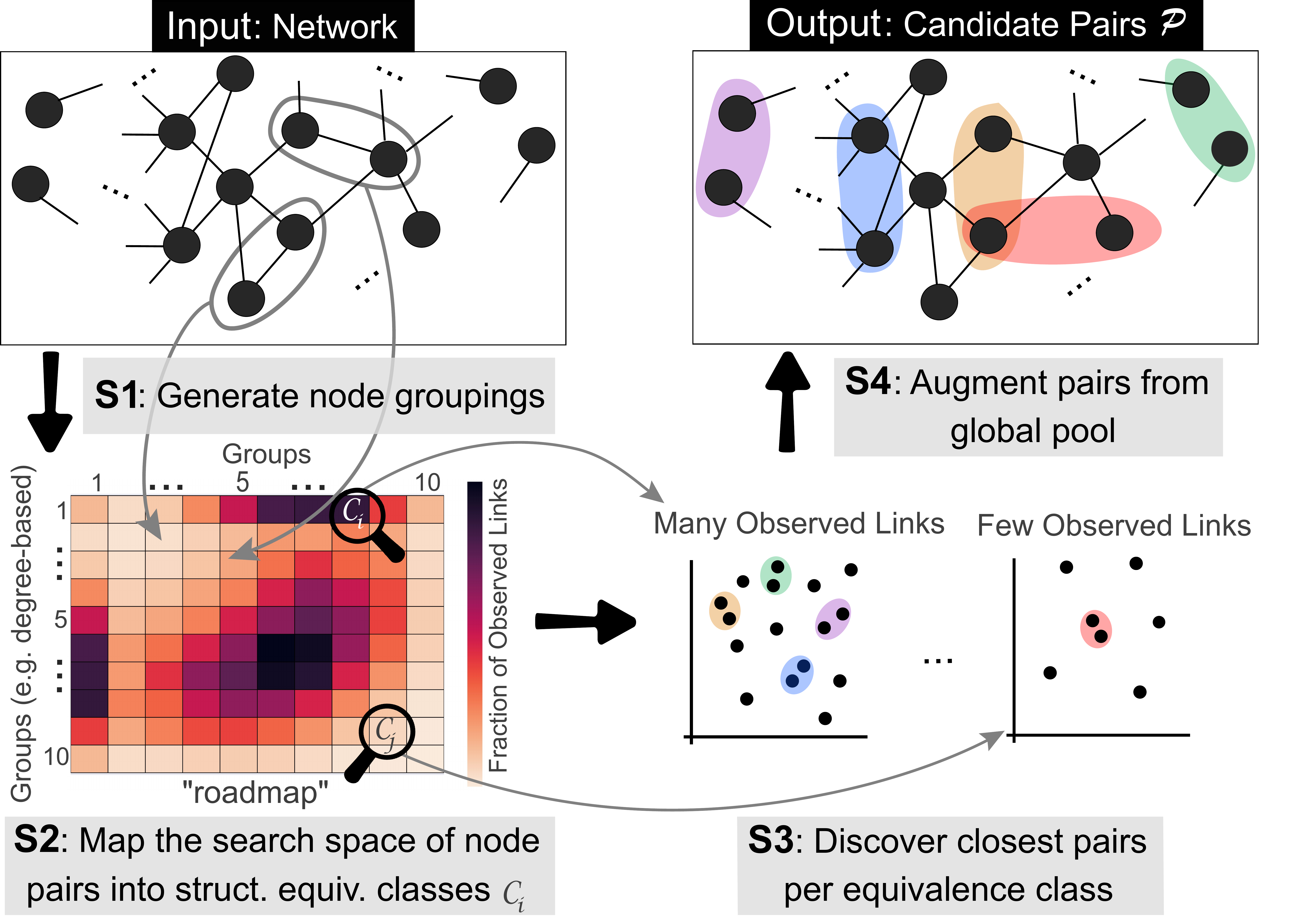}
    \caption{Our proposed framework \method chooses candidate pairs from the quadratic, highly-skewed search space of possible links by first constructing a \emph{roadmap}, which partitions the search space into structural equivalence classes of node pairs to capture how much pairs in each location resemble the \emph{observed} links. This roadmap tells \method how closely to look in each section of the search space. \method follows the roadmap, selecting from each equivalence class the node pairs in closest proximity.} 
    \label{fig:crown-jewel}
    \vspace{-0.45cm}
\end{figure}

Link prediction is often evaluated via a \emph{ranking}, where pairs of nodes that are not currently linked are sorted based on the ``likelihood'' score given by the method being evaluated \cite{martinez2016survey}. To construct the ranking, a ``ground-truth'' test set of node pairs is constructed by either (1) removing a certain percentage of links from a graph at random or (2) removing the newest links that formed in the graph, if edges have timestamps. These removed edges form the test positives, and the same number of unlinked pairs are generated at random as test negatives. The methods are then evaluated on how well they are able to rank the test positives higher than the test negatives. 

However, when link prediction is applied in practice, these ground truth labels are not known, since that is the very question that link prediction is attempting to answer. Instead, \emph{any} pair of nodes that are not currently linked could link in the future. Thus, to identify likely missing or future links, a link prediction method would need to consider $\order(n^2)$ node pairs for a graph with $\numNodes$ nodes; most of which in sparse, real-world networks would turn out to not link. Proximity, on its own, is only a weak signal, sufficient to rank pairs in a balanced test set, but likely to turn up many false positives in an asymptotically skewed space, leaving discovering the relatively small number of missing or future links a challenging problem. 

Proximity-based link prediction heuristics \cite{liben2007link}, such as Common Neighbors, could ignore some of the search space, such as nodes that are farther than two hops from each other, but this would not extend to other notions of proximity, like proximity-preserving embeddings.
Duan \emph{et al.} studied the problem of pruning the search space \cite{duan2017ensemble}, but formulated it as top-$k$ link prediction, which attempts to predict a small number of links, but misses a large number of missing links in the process, suffering from low recall. 

The goal of this work is to develop a principled approach to choose, from the quadratic and skewed space of possible links, a set of candidate pairs for a link prediction method to make decisions about.
We envision that this will allow current and future developments 
to be realized for link prediction in practice, where no ground-truth set is available.

\begin{problem}
Given a graph and a proximity function between nodes, we seek to return a candidate set of node pairs for a link predictor to make decisions about, such that the set is significantly smaller than the quadratic search space, but contains many of the missing and future links. 
\end{problem}

Our insight to handle the vast number of negatives is to consider not \emph{just} the proximity of nodes, but also their \textit{structural resemblance} to observed links. We measure resemblance as the fraction of observed links that fall in inferred, graph-structural equivalence classes of node pairs. For example, Fig.~\ref{fig:crown-jewel} shows one possible grouping of nodes based on their degrees, where the resulting structural equivalence classes (the cells in the ``roadmap'') capture what fraction of observed links form between nodes of different degrees. Based on the roadmap, equivalence classes with a high fraction of observed edges are expected to contain more unlinked pairs than those with lower resemblance. We then employ node proximity within equivalence classes, rather than globally, which decreases false positives that are in close proximity, but do not resemble observed links, and decreases false negatives that are farther away in the graph, but resemble many observed edges. Moreover, to avoid computing proximities for all pairs of nodes within each equivalence class, we extend self-tuning locality sensitive hashing (LSH).
Our main \textbf{contributions} are:
\begin{itemize}
    \item \textbf{Formulation \& Theoretical Connections.} Going beyond the heuristic of proximity between nodes, we model the plausibility of a node pair being linked as \emph{both} their proximity \emph{and} their structural resemblance to observed links. Based on this insight, we propose Future Link Location Models (\ourModel), which combine Proximity Models and Stochastic Block Models; and we prove that Proximity Models are a naive special case. 
    \S~\ref{sec:theory}
    \item \textbf{Scalable Method.} We develop a scalable method, \method (Fig.~\ref{fig:crown-jewel}), which implements \ourModel, and uses locality sensitive hashing to implicitly ignore unimportant pairs. \S~\ref{sec:method}
    \item \textbf{Empirical Analysis.} We evaluate \method 
    on 13 diverse datasets from different domains, where 
    it returns on average 22-33\% more missing links than embedding-based models and 7-30\% more than strong heuristics.
    \S~\ref{sec:evaluation}
\end{itemize}
Our code is at {\href{https://github.com/GemsLab/LinkWaldo}{https://github.com/GemsLab/LinkWaldo}}.

\section{Related work}
In this paper, we focus on the understudied problem of choosing candidate pairs from the quadratic space of possible links, for link prediction methods to make predictions about. Link prediction techniques range from heuristic definitions of similarity, such as Common Neighbors \cite{liben2007link}, Jaccard Similarity \cite{liben2007link}, and Adamic-Adar \cite{adamic2003friends}, to machine learning approaches, such as latent methods, which learn low-dimensional node representations that preserve graph-structural proximity in latent space \cite{hamilton2017representation}, and GNN methods, which \emph{learn} heuristics specific to each graph \cite{zhang2018link} or attempt to re-construct the observed adjacency matrix \cite{kipf2016variational}. For detailed discussion of link prediction techniques, we refer readers to \cite{liben2007link} and \cite{martinez2016survey}.

\vspace{0.1cm}
\noindent \textit{Selecting Candidate Pairs.}  
The closest problem to ours is top-$\budget$ link prediction \cite{duan2017ensemble}, which attempts to take a particular link prediction method and prune its search space to directly return the $\budget$ highest score pairs. 
One method \cite{duan2017ensemble} samples multiple subgraphs to form a bagging ensemble, and 
performs NMF on each subgraph, returning the nodes with the largest latent factor products from each, while leveraging early-stopping. 
The authors view their method's output as predictions rather than candidates, and thus focus on high precision at small values of $\budget$ relative to our setting. 
Another approach, Approximate Resistance Distance Link Predictor~\cite{pachev2018fast} generates spectral node embeddings by constructing a low-rank approximation of the graph's \textit{effective resistance} matrix, and applies a $\budget$-closest pairs algorithm on the embeddings, predicting these as links. 
However, this approach does not scale to moderate embedding dimensions (e.g., the dimensionality of 128 often-used used in embedding methods), and is often outperformed by the simple common neighbors heuristic.

A related problem is link-recommendation, which seeks to identify the $\budget$ most relevant nodes to a query node. It has been studied in social networks for friend recommendation \cite{song2015top}, and in knowledge graphs \cite{joshi2020searching} to pick subgraphs that are likely to contain links to a given query entity. In contrast, we focus on candidate pairs globally, not specific to a query node.

\section{Theory}
\label{sec:theory}
\label{subsec:prelims}

\begin{table}[b!]
\vspace{-0.2cm}
\caption{Description of major symbols.}
\label{tab:Symbols}
\vspace{-0.2cm}
\centering
\resizebox{\columnwidth}{!}{
  \begin{tabular}{ll}
  \toprule
     \textbf{Notation} & \textbf{Description}  \\ \midrule
     $\graph = (\nodes, \edges), \adj$ & Graph, nodes, edges, adjacency matrix \\
     $|\nodes|=\numNodes, |\edges|=\numEdges$ & Number of nodes resp. edges in $\graph$\\
     $\newEdges$ & Unobserved future or missing links\\
     $\grouping$, $\memberPart$ & Grouping of $\nodes$, Partition of $\nodes \times \nodes$ \\
     $\emb_v \in \embs, \mem_v$ & Node embedding and membership vector \\
     $\memberPartEl$ & Equivalence class $i$\\
     $\pairs$, $\globalPool$ & Pairs selected by \method, global pool\\
     $\budget, \target$ & Budget for $|\pairs|$, target for an equiv. class\\
     \bottomrule
    \end{tabular}
    }
\end{table}

Let $\graph = (\nodes, \edges)$ be a graph or network with $|\nodes|=n$ nodes and $|\edges|=m$ edges, where $\edges \subseteq \nodes \times \nodes$. 
The adjacency matrix $\adj$ of $\graph$ is an $n \times n$ binary matrix with element $\adjEl_{ij} = 1$ if nodes $i$ and $j$ are linked, and 0 otherwise. The set of node $v$'s neighbors 
is $\neighbors(v)=\{u: (u,v)\in \edges\}$. We summarize the key symbols used in this paper and their descriptions in Table~\ref{tab:Symbols}.

We now formalize the problem that we seek to solve:
\begin{problem}
\label{problem:prob-description}
Given a graph $\graph = (\nodes, \edges)$, a proximity function $\simi: \nodes \times \nodes \rightarrow \reals^+$ between nodes, and a budget $\budget << \numNodes^2$, return a set of plausible candidate node pairs $\pairs \subset \nodes \times \nodes$ of size $|\pairs| = \budget$ for a link predictor to make decisions about.
\end{problem}

We describe next how to define resemblance in a principled way inspired by Stochastic Block Models, introduce a unified model for link prediction methods that use the proximity of nodes to rank pairs, and describe our model, which combines resemblance and proximity to solve Problem~\ref{problem:prob-description}.

\subsection{Stochastic Block Models}

Stochastic Block Models (SBMs) are generative models of networks. They model the connectivity of graphs as emerging from the community or group membership of nodes \cite{nowicki2001estimation}. 

\vspace{0.1cm}
\noindent \textit{Node Grouping.} A node grouping $\grouping$ is a set of groups or subsets $\group_i$ of the nodes that satisfies $\bigcup_{\group_i \in \grouping} \group_i = \nodes$. It is called a \emph{partition} if it satisfies $\group_i \cap \group_j = \emptyset \;\; \forall \group_i \neq \group_j \in \grouping$. Each node $v \in \nodes$ has a $|\grouping|$-dimensional binary membership vector $\mem_v$, with element $\memele_{vi} = 1$ if $v$ belongs to group $\group_i$. 

A node grouping can capture community structure, but it can also capture other graph-structural properties, like the degrees of nodes, in which case the SBM captures the compatibility of nodes w.r.t degree---viz. degree assortativity.

\vspace{0.1cm}
\noindent \textit{Membership Indices}. The membership indices $\ind_{u, v}$ of nodes $u, v$ are the set of group ids $(i, j)$ s.t. $u\in\group_i$ and $v\in\group_j$: 
$\ind_{u, v} \triangleq 
\{i : \memele_{u,i} = 1\} \times \{j : \memele_{v,j} = 1\}$, \; $i, j \in \{ 1, 2, \dots, |\grouping| \}$. 

\vspace{0.1cm}
\noindent \textit{Membership equivalence relation \& classes}.
The membership indices form the equivalence relation $\memER$: 
$(u, v) \memER (u', v') \iff \ind_{u, v} = \ind_{u', v'}$. 
This induces a partition $\memberPartLong$ over all pairs of nodes $\nodes \times \nodes$ (both linked and unlinked), where the equivalence class $\memberPartEl$ contains all node pairs $(u,v)$ with the same membership indices, i.e., $\mem_u = \mem$ and $\mem_v = \mem'$ for some $\mem, \mem' \in \{0, 1\}^{|\grouping|}$.  
We denote the equivalence class of pair $(u, v)$ as $[(u, v)]_{\memER}$.

\begin{example}
\label{ex:equiv-class-0} If nodes are grouped by their degrees to form $\grouping$, then the membership indices $\ind_{u, v}$ of node pair $(u, v)$ are determined by $u$ and $v$'s respective degrees. For example, in Fig.~\ref{fig:crown-jewel}, the upper circled node pair has degrees $3$ and $5$ respectively, which determines their equivalence class---in this case, the cell $(3, 5)$ in the roadmap. Each cell of the roadmap corresponds to an equivalence class $\memberPartElArb_i \in \memberPart$.

\end{example}

We can now formally define an SBM:
\begin{definition}[Stochastic Block Model - \sbm]
Given a node grouping $\grouping$ and a $|\grouping| \times |\grouping|$ weight matrix $\groupingWeight$ specifying the propensity for links to form across groups, 
the probability that two nodes link given their group memberships is $\mymathhl{\pr(\adjEl_{uv} = 1|\mem_u, \mem_v) = \sigma(\mem_u^T \groupingWeight \mem_v)}$, where function $\sigma(\cdot)$ converts the dot product to a probability (e.g., sigmoid)~\cite{miller2009nonparametric}.
\end{definition}

The vanilla SBM \cite{nowicki2001estimation} assigns each node to one group (i.e., the grouping is a partition and membership vectors $\mem$ are one-hot), in which case $\mem_u^T \groupingWeight \mem_v = \groupingWeightEl_{\ind_{u,v}}$. The overlapping SBM \cite{miller2009nonparametric, latouche2011overlapping} 
is a generalization that allows nodes to belong to multiple groups, in which case membership vectors may have multiple elements set to 1, and $\mem_u^T \groupingWeight \mem_v = \sum_{i,j \in \ind_{u, v}} \groupingWeightEl_{ij}$.

\vspace{0.1cm}
\noindent \textit{Resemblance.} Given an SBM with grouping $\grouping$, we define the resemblance of node pair $(u, v) \in \nodes \times \nodes$ under the SBM as the percentage of the observed (training) edges that have the same group membership as $(u,v)$:
\begin{equation}
\label{eq:compat}
    \compat(u, v) \triangleq \frac{|\{(v_1, v_2) \in \edges : (v_1, v_2) \memER (u, v) \}|}{\numEdges}.
\end{equation}

\begin{example}
\label{ex:equiv-class} 
In Figure~\ref{fig:crown-jewel}, the resemblance $\compat(u, v)$ of node pair $(u,v)$ corresponds to the density of the cell that it maps to. 
The high density in the border cells indicates that 
many low-degree nodes connect to high-degree nodes. 
The dense central cells indicate that mid-degree nodes connect to each other.
\end{example}

\subsection{Proximity Models}
\label{subsec:prox}

Proximity-based link prediction models (\prm) model the connectivity of graphs based on the proximity of nodes. Some methods define the proximity of nodes with a heuristic, such as Common Neighbors (\cn), Jaccard Similarity (\js), and Adamic/Adar (\adad). More recent approaches \emph{learn} latent similarities between nodes, capturing the proximity in latent embeddings such that nodes that are in close proximity in the graph have similar latent embeddings (e.g., dot product) \cite{hamilton2017representation}.

\vspace{0.1cm}
\noindent \textit{Node Embedding.} A node embedding, $\emb_v \in \realsd$, is a real-valued, $\dimension$-dimensional vector representation of a node $v \in \nodes$. 
We denote all the node embeddings as matrix $\embs \in \reals^{n \times \dimension}$.

\begin{definition}[Proximity Model - \prm] {Given a similarity or proximity function $\simi : \nodes \times \nodes \rightarrow \reals^{+}$ between nodes, the probability that nodes $u$ and $v$ link
is an increasing function of their proximity: 
$\mymathhl{\pr(\adjEl_{uv} = 1|\simi(\cdot,\cdot)) = f(\simi(u, v))}$.}
\end{definition}

Instances of the \prm include the Latent Proximity Model:

\begin{equation}
\label{eq:simi-lapm}
    \simi_{\text{\lpm}}(u, v) \triangleq \emb_u^T\emb_v,
\end{equation}
where $\emb_u, \emb_v$ are the nodes' latent embeddings; and the 
Common Neighbors, Jaccard Similarity, and Adamic/Adar models:
\begin{equation}
\label{eq:simi-cn}
    \simi_{\cn}(u, v) \triangleq |\neighbors(u) \cap \neighbors(v)|,
\end{equation}
\begin{equation}
\label{eq:simi-js}
    \simi_{\js}(u, v) \triangleq \frac{|\neighbors(u) \cap \neighbors(v)|}{|\neighbors(u) \cup \neighbors(v)|}, \text{ and}
\end{equation}
\begin{equation}
\label{eq:simi-aa}
    \simi_{\adad}(u, v) \triangleq \sum_{v' \in \neighbors(u) \cap \neighbors(v)}\frac{1}{\log|\neighbors(v')|}.
\end{equation}

\subsection{Proposed: Future Link Location Model}

Unlike \sbm and \prm, our model, which we call the \emph{Future Link Location Model} (\ourModel), is not just modeling the probability of links, but rather \emph{where} in the search space future links are likely to fall. 
To do so, \ourModel
uses a partition of the search space, and corresponding SBM, as a roadmap that
gives the number of new edges expected to fall in each equivalence class. To formalize this idea, we first define two distributions:

\vspace{0.1cm}
\noindent \textit{New and Observed Distributions}. 
The new link distribution $\testDist(\memberPartEl) \triangleq \pr(\memberPartEl|\newEdges)$ and the observed link distribution $\trainDist(\memberPartEl) \triangleq \pr(\memberPartEl|\edges)$ capture the fraction of new and observed edges that fall in equivalence class $\memberPartEl$, respectively.

\begin{definition}[Future Link Location Model - \ourModel]
\label{dfn:fllm}
Given an overlapping \sbm with grouping $\grouping$, the expected number of new links in equivalence class $\memberPartEl$ is proportional to the number of \emph{observed} links in $\memberPartEl$, and the probability of node pair $(u,v)$ linking is equal to the pair's resemblance times their proximity relative to other nodes in $[(u,v)]_{\memER}$:
\resizebox{\columnwidth}{!}{ $\mymathhl{\pr(\adjEl_{uv} = 1 | \mem_u, \mem_v, \simi(\cdot, \cdot)) = \compat(u,v) \cdot \frac{\simi(u,v)}{\sum_{(u',v') \in [(u, v)]_{\memER}} \simi(u', v')}}$}.
\end{definition}

\ourModel employs the following theorem, which states that if $q\%$ of the observed links fall in equivalence class $\memberPartEl$, then in expectation, $q\%$ of the \emph{unobserved} links will fall in equivalence class $\memberPartEl$. We initially assume that the unobserved future links follow the same distribution as the observed links---as generally assumed in machine learning---i.e., the relative fraction of links in each equivalence class will be the same for future links as observed links: $\testDist = \trainDist$. In the next subsection, we show that for a fixed $\budget$, the error in this assumption is determined by the total variation distance
between $\testDist$ and $\trainDist$, and hence is upper-bounded by a constant.

\begin{theorem}
\label{thm:exp-val}
Given an overlapping SBM with grouping $\grouping$ inducing the partition $\memberPart$ of $\nodes \times \nodes$ for a graph $\graph = (\nodes, \edges)$, out of $\budget$ new (unobserved) links $\newEdges$, the expected number that will fall in equivalence class $\memberPartEl$ and its variance are:
\begin{equation}
    \label{eq:exp-val-def}
    \expectedVal[|\memberPartEl \cap \edges_{new}|] = \frac{\budget|\memberPartEl \cap \edges|}{\numEdges}
\end{equation}
\begin{equation}
    \label{eq:var-def}
    \var(|\memberPartEl \cap \newEdges|) = \frac{\budget|\memberPartEl \cap \edges||\edges \setminus \memberPartEl|}{\numEdges^2}. 
\end{equation}
\end{theorem}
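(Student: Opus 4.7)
The plan is to reduce the count $|\memberPartEl \cap \newEdges|$ to a sum of Bernoulli indicators and then read off the expectation and variance from the standard binomial formulas. First I would invoke the assumption $\testDist = \trainDist$ stated just before the theorem: for any equivalence class $\memberPartElArb$, the probability that a single unobserved link lies in $\memberPartElArb$ equals the fraction of observed edges lying in $\memberPartElArb$, so
\begin{equation*}
\pr\bigl((u,v) \in \memberPartEl \,\big|\, (u,v) \in \newEdges\bigr) \;=\; \trainDist(\memberPartEl) \;=\; \frac{|\memberPartEl \cap \edges|}{\numEdges}.
\end{equation*}
Call this probability $p$.

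Next, for the $\budget$ new links, I would define indicator variables $X_1, \dots, X_\budget$ with $X_i = 1$ if the $i$-th unobserved link falls in $\memberPartEl$, and $0$ otherwise. Treating the new edges as i.i.d.\ samples from the distribution $\testDist$ over the partition $\memberPart$, each $X_i$ is a Bernoulli$(p)$ random variable, and $|\memberPartEl \cap \newEdges| = \sum_{i=1}^{\budget} X_i$ is Binomial$(\budget, p)$. Linearity of expectation then gives
\begin{equation*}
\expectedVal[|\memberPartEl \cap \newEdges|] \;=\; \budget p \;=\; \frac{\budget|\memberPartEl \cap \edges|}{\numEdges},
\end{equation*}
which is Eq.~\eqref{eq:exp-val-def}. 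For the variance, independence of the $X_i$'s yields $\var(|\memberPartEl \cap \newEdges|) = \budget p (1-p)$, and it only remains to rewrite $1 - p = (\numEdges - |\memberPartEl \cap \edges|)/\numEdges = |\edges \setminus \memberPartEl|/\numEdges$, using $\memberPartEl \cap \edges \subseteq \edges$. Substituting gives Eq.~\eqref{eq:var-def}.

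The main obstacle is really conceptual rather than computational: justifying the i.i.d.\ Bernoulli model for the $\budget$ new links. The expectation formula follows from linearity alone and only requires $\testDist = \trainDist$, but the variance formula relies on independence across the $\budget$ unobserved links. I would explicitly note that this follows from the standard generative view of \ourModel, in which each future link is drawn independently from the same categorical distribution $\testDist$ over $\memberPart$. Any deviation from this (e.g., finite-population sampling without replacement) would introduce a hypergeometric correction; flagging this explicitly makes the scope of the theorem precise, and also foreshadows the next subsection's discussion of bounding the error when $\testDist \neq \trainDist$ via total variation distance.
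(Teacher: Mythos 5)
Your proposal is correct and follows essentially the same route as the paper: both treat $|\memberPartEl \cap \newEdges|$ as a Binomial$(\budget, p)$ count with $p = \trainDist(\memberPartEl) = |\memberPartEl \cap \edges|/\numEdges$ under the assumption $\testDist = \trainDist$, and then read off the mean and variance. Your explicit flagging of the i.i.d.\ assumption behind the variance formula is a reasonable clarification of what the paper leaves implicit, but it does not change the argument.
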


\begin{proof} Observe that the number of the $\budget$ new edges that fall in 
equivalence class $\memberPartEl$, i.e., $|\memberPartEl \cap \newEdges|$,
is a binomial random variable over $\budget$ trials, with success probability 
$\pr(\memberPartEl|\newEdges)$. 
Thus, the random variable's expected value is

{\small
\begin{equation}
    \label{eq:exp-val-1}
    \expectedVal[|\memberPartEl \cap \newEdges|] = \budget \pr(\memberPartEl|\newEdges),
\end{equation}
}
and its variance is 
{\small
\begin{equation}
    \label{eq:var-1}
    \var(|\memberPartEl \cap \newEdges|) = \budget \pr(\memberPartEl|\newEdges) (1 - \pr(\memberPartEl|\newEdges)).
\end{equation}
}
We can derive $\pr(\memberPartEl|\newEdges)$ via $\pr(\memberPartEl|\edges)$ and Bayes' rule:

{\small
\begin{equation}
    \pr(\memberPartEl |\edges) = \frac{\pr(\edges|\memberPartEl) \pr(\memberPartEl)}{\pr(\edges)} 
    = \frac{\frac{|\memberPartEl \cap \edges|}{|\memberPartEl|}\frac{|\memberPartEl|}{|\nodes \times \nodes|}}{\numEdges/|\nodes \times \nodes|}
    = \frac{|\memberPartEl \cap \edges|}{\numEdges}. \nonumber
\end{equation}
}

Combining the last equation with Eq.~\eqref{eq:exp-val-1} results directly in Eq.~\eqref{eq:exp-val-def}, and by substituting into Eq.~\eqref{eq:var-1} we obtain: 

{\small
\begin{equation}
    \label{eq:var-final}
    \var(|\memberPartEl \cap \newEdges|) =  \frac{\budget|\memberPartEl \cap \edges|}{\numEdges} (1 - \frac{|\memberPartEl \cap \edges|}{\numEdges})
    = \frac{\budget|\memberPartEl \cap \edges||\edges \setminus \memberPartEl|}{\numEdges^2},
    \nonumber 
\end{equation}
} 
where we used the fact that $|\edges \setminus \memberPartEl|= |\edges| - |\memberPartEl \cap \edges|$. 
\end{proof}

\subsection{Guarantees on Error}

While this derivation assumed that the future link distribution is the same as the observed link distribution, we now show that for a fixed $\budget$, the amount of error incurred when this assumption does not hold is entirely dependent on the total variation distance, and hence is upper-bounded by $2\budget$.

\noindent \emph{Total Variation Distance.} The total variation distance \cite{tsybakov2008introduction} between $\testDist$ and $\trainDist$, which is a metric, is defined as
\begin{equation}
    \varDist(\testDist, \trainDist) \triangleq \sup_{\mathcal{A} \subset \memberPart} |\testDist(\mathcal{A}) - \trainDist(\mathcal{A})|.
\end{equation} 

\noindent 
\emph{Total Error.} The total error made in the approximation of $\expectedVal[|\memberPartEl \cap \newEdges|]$ using Eq.~\eqref{eq:exp-val-def} is defined as
\begin{align}
    \totalErr &\triangleq \sum_{\memberPartEl \in \memberPart} |\hat{\expectedVal}[|\memberPartEl \cap \newEdges|] - \expectedVal[|\memberPartEl \cap \newEdges|]| \nonumber \\
    &= \sum_{\memberPartEl \in \memberPart}|\budget \trainDist(\memberPartEl) - \budget \testDist(\memberPartEl)|,
    \label{eq:error}
\end{align}
where $\hat{\expectedVal}[|\memberPartEl \cap \newEdges|]$ is the true expected value regardless of whether or not $\testDist = \trainDist$ holds.
\begin{theorem}
\label{thm:err-exp-val}
The total error incurred over $\memberPart$ in the computation of the expected number of new edges that fall in each $\memberPartEl \in \memberPart$ is an increasing function of the the number of new pairs $\budget$ and the total variation distance between $\testDist$ and $\trainDist$. Furthermore, it has the following upper-bound:
\begin{equation}
    \totalErr = 2 \budget \, \varDist(\testDist, \trainDist) \leq \min(2\budget, 2 \budget  \sqrt{1/2\kl(\testDist||\trainDist)}).
\end{equation}
\end{theorem}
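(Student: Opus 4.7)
The plan is to unwind the definition of $\totalErr$ in Eq.~\eqref{eq:error} directly into a scaled total variation distance, after which both the equality and the stated bound fall out immediately. First I would factor $\budget$ out of the sum to obtain $\totalErr = \budget \sum_{\memberPartEl \in \memberPart}|\trainDist(\memberPartEl) - \testDist(\memberPartEl)|$, which already makes the monotonicity claim (increasing in $\budget$ and in the total pointwise discrepancy of the two distributions) transparent.

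The next step is to identify the resulting $\ell_1$-distance with $2\,\varDist(\testDist,\trainDist)$. This requires bridging the paper's $\sup$-over-subsets definition of $\varDist$ with the equivalent discrete-sum form. I would argue this by taking the optimizing set in the supremum to be $\mathcal{A}^\star = \{\memberPartEl : \testDist(\memberPartEl) \ge \trainDist(\memberPartEl)\}$ and observing that, because both distributions have total mass $1$, the positive and negative parts of $\testDist - \trainDist$ must contribute equal absolute mass. This yields $\sum_{\memberPartEl}|\testDist(\memberPartEl) - \trainDist(\memberPartEl)| = 2\,\varDist(\testDist,\trainDist)$, and hence the claimed equality $\totalErr = 2\budget\,\varDist(\testDist,\trainDist)$.

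For the upper bound I would combine two standard facts. Since $\testDist$ and $\trainDist$ are probability measures, $\varDist(\testDist,\trainDist) \le 1$, giving $\totalErr \le 2\budget$. Then Pinsker's inequality gives $\varDist(\testDist,\trainDist) \le \sqrt{\tfrac{1}{2}\,\kl(\testDist||\trainDist)}$, which upon multiplication by $2\budget$ yields the second half of the bound; taking the minimum of the two produces the stated inequality.

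The only real obstacle is bookkeeping: verifying that the $\sup$-over-subsets definition of $\varDist$ used by the paper coincides with half the $\ell_1$ distance on a discrete space. That identity is the single nontrivial bridge between Eq.~\eqref{eq:error} and the stated theorem; the monotonicity observation, the trivial bound $\varDist \leq 1$, and the invocation of Pinsker's inequality are each essentially a one-line citation.
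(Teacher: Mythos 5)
Your proposal is correct and follows essentially the same route as the paper: factor out $\budget$, identify the resulting $\ell_1$ sum with $2\,\varDist(\testDist,\trainDist)$, then apply $\varDist\le 1$ and Pinsker's inequality. The only difference is that you spell out the standard equivalence between the $\sup$-over-subsets definition of total variation and half the $\ell_1$ distance, whereas the paper simply cites it.
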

\begin{proof} From the definition of total error in Eq.~\eqref{eq:error}, the first equality holds from \cite{levin2017markov}. The inequality holds based on the fact that $\varDist(\cdot, \cdot)$ ranges in $[0, 1]$, and Pinsker's inequality \cite{tsybakov2008introduction}, which upper-bounds $\varDist(\cdot, \cdot)$ via KL-divergence.
\end{proof}

\subsection{Proximity Model as a Special Case of \ourModel}
\label{subsec:pm-special-case}

The \prm, defined in \S~\ref{subsec:prox}, is a special case of \ourModel, where \ourModel's grouping contains just one group $\grouping = \{\nodes\}$. That is, if the nodes are not grouped, then the models give the same result. Thus, \ourModel's improvement over \lpm is a result of using structurally-meaningful groupings over the graph. The following theorem states this result formally.

\begin{theorem}
    {For a single node grouping $\grouping = \{\nodes\}$, both \prm and \ourModel give the same ranking of pairs $(u, v) \in \nodes \times \nodes$}:
    {\footnotesize
    \begin{align}
        \pr_{\text{\prm}}(\adjEl_{uv} = 1|\simi(\cdot, \cdot)) &> \pr_{\text{\prm}}(\adjEl_{u'v'} = 1|\simi(\cdot, \cdot)) \iff \nonumber \\ \pr_{\text{\ourModel}}(\adjEl_{uv} = 1 | \mem_u, \mem_v, \simi(\cdot, \cdot)) &> \pr_{\text{\ourModel}}(\adjEl_{u'v'} = 1 | \mem_{u'}, \mem_{v'}, \simi(\cdot, \cdot)). \nonumber
    \end{align}
    }
\end{theorem}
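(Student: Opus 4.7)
The plan is to unwind both probability expressions under the degenerate grouping $\grouping = \{\nodes\}$ and show that each reduces to a strictly increasing function of $\simi(u,v)$, from which the biconditional ranking equivalence is immediate.

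First I would observe that when $\grouping = \{\nodes\}$ every node has the same one-dimensional membership vector $\mem_v = [1]$, so every pair $(u,v)$ has membership indices $\ind_{u,v} = \{(1,1)\}$. Consequently the equivalence relation $\memER$ is trivial: $\memberPart = \{\nodes \times \nodes\}$ consists of a single equivalence class, and $[(u,v)]_{\memER} = \nodes \times \nodes$ for every pair. This is the structural fact that drives everything else.

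Next I would evaluate the two factors in the \ourModel probability from Definition~\ref{dfn:fllm}. By Eq.~\eqref{eq:compat}, the resemblance $\compat(u,v) = |\{(v_1,v_2) \in \edges : (v_1,v_2) \memER (u,v)\}|/m = m/m = 1$ for every pair, since all observed edges belong to the single equivalence class. Likewise the normalizer $Z \triangleq \sum_{(u',v') \in [(u,v)]_{\memER}} \simi(u',v') = \sum_{(u',v') \in \nodes \times \nodes} \simi(u',v')$ does not depend on $(u,v)$; it is a positive constant (assuming $\simi$ takes values in $\reals^+$ and is not identically zero). Therefore $\pr_{\text{\ourModel}}(\adjEl_{uv}=1 \mid \mem_u, \mem_v, \simi(\cdot,\cdot)) = \simi(u,v)/Z$, a strictly increasing function of $\simi(u,v)$. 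On the \prm side, the definition already gives $\pr_{\text{\prm}}(\adjEl_{uv}=1 \mid \simi(\cdot,\cdot)) = f(\simi(u,v))$ for an increasing function $f$.

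To conclude, I would note that whenever two scoring functions $g_1,g_2$ over a common domain are each strictly increasing functions of the same underlying quantity $\simi(u,v)$, they induce identical strict orderings: $g_1(u,v) > g_1(u',v') \iff \simi(u,v) > \simi(u',v') \iff g_2(u,v) > g_2(u',v')$. Applying this to $g_1 = \pr_{\text{\prm}}$ and $g_2 = \pr_{\text{\ourModel}}$ yields the claimed biconditional. The only subtle point, and thus the main obstacle to watch for, is handling ties and the regularity of $f$ and $\simi$: one must assume $f$ is strictly increasing (or interpret the inequalities with non-strict semantics consistently on both sides) and $Z > 0$, so that division by the normalizer and the monotone transformation $f$ both preserve order rather than collapse it. Once these mild conditions are in place, the argument is a short monotonicity chase.
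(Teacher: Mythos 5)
Your proposal is correct and follows essentially the same route as the paper's proof: the single grouping collapses the partition to one equivalence class, forcing $\compat(u,v)=1$ and a constant normalizer, so $\pr_{\text{\ourModel}}$ reduces to $\simi(u,v)/c$ and both models are increasing functions of $\simi(\cdot,\cdot)$. Your added remarks on strict monotonicity of $f$ and positivity of the normalizer are reasonable tightenings of hypotheses the paper leaves implicit, but they do not change the argument.
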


\begin{proof}
Since $\grouping = \{\nodes\}$, $\memberPart = \{\nodes \times \nodes\}$, and since $\edges \subseteq \nodes \times \nodes$, all observed edges fall in the lone equivalence class $\memberPartElArb = \nodes \times \nodes$. Thus $\compat(u, v) = 1 \ \forall (u, v) \in \nodes \times \nodes$. Since there is only one equivalence class, the denominator in Dfn.~\ref{dfn:fllm} is equal to a constant $c \triangleq \sum_{(u',v') \in [(u,v)]_{\memER}} \simi(u',v') = \sum_{(u',v') \in \nodes \times \nodes} \simi(u', v') \ \forall (u, v) \in \nodes \times \nodes$. Therefore, $\pr_{\text{\ourModel}}(\adjEl_{uv} = 1 | \mem_u, \mem_v, \simi(\cdot, \cdot)) = \frac{1}{c}\simi(u, v)$, {and both models are increasing functions of $\simi(\cdot,\cdot)$.}
\end{proof}

\section{Method}
\label{sec:method}

We solve Problem~\ref{problem:prob-description} by using our FLLM model 
in a new method, 
\method, shown in Fig.~\ref{fig:crown-jewel},  which has four steps:
\begin{itemize}
    \item \sOne: Generate node groupings and equivalence classes.
    \item \sTwo: Map the search space, deciding how many candidate pairs to return from 
    each equivalence class.
    \item \sThree: Search each equivalence class, returning directly the highest-proximity pairs, and stashing some slightly lower-proximity pairs in a global pool.
    \item \sFour: Choose the best pairs from the global pool to augment those returned from each equivalence class.
\end{itemize}

We discuss these steps next, give pseudocode in Alg.~\ref{alg:main-algo}, and discuss time complexity in the appendix.

\subsection{Generating Node Groupings (\sOne)}
\label{subsec:groupings}

In theory, we would like to infer the groupings that directly maximize the likelihood of the observed adjacency matrix. However, the techniques for inferring these groupings (and the corresponding node membership vectors) are computationally intensive, relying on Markov chain Monte Carlo (MCMC) methods \cite{Mehta19sbmgnn}. Indeed, these methods are generally applied in networks with only up to a few hundred nodes \cite{miller2009nonparametric}. In cases where $\numNodes$ is large enough that considering all $\order(\numNodes^2)$ node pairs would be computationally infeasible, so would be MCMC. 
Instead \method uses a fixed grouping, though it is agnostic to how the nodes are grouped. We discuss a number of sensible groupings below, and discuss how to set the number of groups in \S~\ref{subsec:params}. Any other grouping can be readily used within our framework, but should be carefully chosen to lead to strong results.

\vspace{0.1cm}
\noindent $\bullet$ \emph{Log-binned Node Degree (\groupingOne).} This grouping captures degree assortativity \cite{newman2003mixing}, i.e., the extent to which low degree nodes link with other low degree nodes vs. high degree nodes, by creating uniform bins in log-space (e.g., Fig.~\ref{fig:crown-jewel};  linear bins). 

\vspace{0.1cm}
\noindent $\bullet$ \emph{Structural Embedding Clusters (\groupingTwo).} 
This grouping extends \groupingOne by clustering latent node embeddings that capture structural roles of nodes~\cite{Rossi2019FromCT}.  

\vspace{0.1cm}
\noindent $\bullet$ \emph{Communities (\groupingThree).} This grouping captures community structure by clustering proximity preserving latent embeddings or using community detection methods. 

\vspace{0.1cm}
\noindent $\bullet$ \emph{Multiple Groupings (\groupingFour).} Any subset of these groupings or any other groupings can be combined into a new  
grouping, by setting $\mem_v$ element(s) to 1 for $v$'s membership in each grouping,
since nodes can have overlapping group memberships.

\begin{algorithm}[t]
\renewcommand{\algorithmicindent}{0.8em}
	\caption{\small \textsc{\method}($\graph$, $\simi(\cdot, \cdot)$, $\budget, \toler$)}
	\label{alg:main-algo}
	\begin{algorithmic}[1]
	{\small
	   \State \gray{/* \sOne: Generating Node Groupings */}
	   \State Generate node grouping $\grouping$ inducing partition $\memberPart$ \Comment{\gray{\S~\ref{subsec:groupings}}}
	   \State $\pairs, \globalPool \gets \emptyset, \emptyset$ \Comment{\gray{Initialize pairs to return and global pool}}
	   \For{$\memberPartEl \in \memberPart$} \Comment{\gray{Search each equivalence class \S~\ref{subsec:closest-pairs}}}
	       \State \gray{/* \sTwo: Mapping the Search Space */}
	       \State $\targetMean \gets \expectedVal[|\memberPartEl \cap \newEdges|]$ \Comment{\gray{Eq.~\eqref{eq:exp-val-def}}}
	       \State $\targetVar \gets \sqrt{\var(|\memberPartEl \cap \newEdges|)}$ \Comment{\gray{Eq.~\eqref{eq:var-def}}}
	       \State \gray{/* \sThree: Discovering Closest Pairs per Equivalence Class */}
	       \If{$|\memberPartEl| < \toler$} 
	           \State $\Call{SelectPairsExact}{\pairs, \globalPool, \memberPartEl, \targetMean, \targetVar}$
	       \Else
	           \State $\Call{SelectPairsApprox}{\pairs, \globalPool, \memberPartEl, \targetMean, \targetVar}$
	       \EndIf
	   \EndFor
	   \State \gray{/* \sFour: Augmenting Pairs from Global Pool */}
	   \State $\pairs \gets$ \, $\pairs \cup \{\text{top } \budget - |\pairs| \text{ pairs from } \globalPool\}$
	   \State \Return $\pairs$
	\vspace{0.1cm}   
	\Procedure{SelectPairsExact}{$\pairs$, $\globalPool$, $\memberPartEl$, $\targetMean$, $\targetVar$} 
	    \State Sort pairs $(u, v) \in \memberPartEl$ in descending order on $\simi(u, v)$
	    \State $\pairs \gets$ $\pairs \cup$ \{top $\targetMean - \targetVar$ pairs\}
	    \State $\globalPool \gets \globalPool \cup$  \{next $2\targetVar$ pairs\}
	\EndProcedure
	\vspace{0.1cm}
	\Procedure{SelectPairsApprox}{$\pairs$, $\globalPool$, $\memberPartEl$, $\targetMean$, $\targetVar$}
	    \For{$i = 1, 2, \dots, \numTrees$} \Comment{\gray{Create $\numTrees$ trees}}
	       \State $\buckets \gets \{(\decompX, \decompY)\}$ \Comment{\gray{buckets start off as the root}}
    	    \While{$\bucketVolume(\buckets) > \target$} \Comment{\gray{cf. Prob.~\ref{prob:closest-pairs} for $\target$ definition}}
    	       \State Choose $\hash(\cdot)$ at random from $\rhHashFamily$
    	       \State $\buckets' \gets \emptyset$ \Comment{\gray{Create new buckets}}
    	       \For{$\bucket \in \buckets$} \Comment{\gray{Branch each leaf (bucket)}}
	               \State $\bucket_{\text{left}}' \gets$ \resizebox{0.72\columnwidth}{!}{$(\{u \in \decompX^{(\bucket)} : \hash(\emb_u) < 0 \}, \{v \in \decompY^{(\bucket)} : \hash(\emb_v) < 0 \})$}
	               \State $\bucket_{\text{right}}' \gets$ \resizebox{0.72\columnwidth}{!}{$(\{u \in \decompX^{(\bucket)} : \hash(\emb_u) \geq 0 \}, \{v \in \decompY^{(\bucket)} : \hash(\emb_v) \geq 0 \})$}
	               \State $\buckets' \gets \buckets' \cup \{\bucket_{\text{left}}, \bucket_{\text{right}}\}$
	           \EndFor
	           \If{$\bucketVolume(\buckets) \leq \target$} 
	                $\buckets \gets \buckets'$
	           \EndIf
    	    \EndWhile
    	\EndFor
	    \State $\pairs \gets$ $\pairs \cup$ \{top $\targetMean - \targetVar$ pairs\}
	    \State $\globalPool \gets \globalPool \cup$  \{next $2\targetVar$ pairs\}
	\EndProcedure
	}
	\end{algorithmic}
	\vspace{-0.1cm}
\end{algorithm}

\subsection{Mapping the Search Space (\sTwo)}
\label{subsec:mapping-search-space}
\method's approach to mapping the search space (i.e., identifying how many pairs to return per class $\memberPartEl$) follows directly from Thm.~\ref{thm:exp-val}. \method computes the expected number of pairs in each equivalence class based on Eq.~\eqref{eq:exp-val-def} 
and its variance based on Eq.~\eqref{eq:var-def}, as a measure of the uncertainty. When \method searches each equivalence class $\memberPartEl$, 
it returns the expected number of pairs \emph{minus} a standard deviation \emph{directly}, and adds more pairs, up to 
a standard deviation past the mean,
to a global pool $\globalPool$. Thus, \method adds into $\pairs$ the $\expectedVal[|\memberPartEl \cap \edges_{new}|] - \sqrt{\var(|\memberPartEl \cap \newEdges|)}$ pairs in closest proximity in equivalence class $\memberPartEl$, and the next $2\sqrt{\var(|\memberPartEl \cap \newEdges|)}$ closest pairs into the global pool $\globalPool$ (both expressions are rounded to the nearest integer). Node pairs that are already linked are skipped. 
\subsection{Discovering Closest Pairs per Equivalence Class (\sThree)}
\label{subsec:closest-pairs}
We now discuss how \method discovers the $\target$ closest unlinked pairs \emph{within} each equivalence class (Fig.~\ref{fig:crown-jewel}), where $\target$ is determined in step \sTwo based on the expected number of pairs in the equivalence class, and variance (uncertainty).

\begin{problem}
\label{prob:closest-pairs}
Given an equivalence class $\memberPartEl$, return the top-$\target$ unlinked pairs in $\memberPartEl$ in closest proximity $\simi(\cdot, \cdot)$, 
where $\target=\expectedVal[|\memberPartEl \cap \edges_{new}|] + \sqrt{\var(|\memberPartEl \cap \newEdges|)}$ (based on \sTwo). 
\end{problem}

For equivalence classes smaller than some tolerance $\toler$, it is feasible to search all pairs of nodes exhaustively. However, for $|\memberPartEl|>\toler$, this should be avoided, to make the search practical. We first discuss this case when using the dot product similarity $\simi_{\text{\lpm}}(\cdot, \cdot)$ in Eq.~\eqref{eq:simi-lapm}, and then discuss it for other similarity models (CN, JS, and AA) given by Eqs.~\eqref{eq:simi-cn}-\eqref{eq:simi-aa}. Finally, we introduce a refinement that improves the robustness of \method against errors in proximity.

\subsubsection{Avoiding Exhaustive Search for Dot Product} In the case of dot product, we 
use Locality Sensitive Hashing (LSH) \cite{wang2014hashing} to avoid searching all $|\memberPartEl|$ pairs. 
LSH functions have the property that the probability of two items colliding is a function of their similarity. We use the following fact:
\begin{fact}
\label{fact:decomp}
The equivalence class $\memberPartEl$ 
can be decomposed into the Cartesian product of two sets $\memberPartEl = \decompX \times \decompY$, where $\decompX \triangleq \{u : \mem_u = \mem \}$ and $\decompY = \{v : \mem_v = \mem'\}$. 
\end{fact}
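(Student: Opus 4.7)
\textbf{Proof proposal for Fact~\ref{fact:decomp}.} The plan is to argue by double inclusion, starting from an arbitrary ``anchor'' pair in the equivalence class and unpacking the definitions of the membership indices $\ind_{u,v}$ and the equivalence relation $\memER$. Concretely, I would fix a representative $(u_0, v_0) \in \memberPartEl$ whose membership vectors are $\mem = \mem_{u_0}$ and $\mem' = \mem_{v_0}$ (these are the $\mem,\mem'$ that define $\decompX$ and $\decompY$ in the statement), and then show that the set of pairs equivalent to $(u_0, v_0)$ under $\memER$ is exactly $\decompX \times \decompY$.

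For the inclusion $\memberPartEl \subseteq \decompX \times \decompY$, I would take $(u', v') \in \memberPartEl$, so by the definition of an equivalence class, $\ind_{u',v'} = \ind_{u_0,v_0}$. Unfolding the definition $\ind_{u,v} = \{i : \memele_{u,i} = 1\} \times \{j : \memele_{v,j} = 1\}$, I would deduce $\mem_{u'} = \mem$ and $\mem_{v'} = \mem'$, placing $(u',v')$ in $\decompX \times \decompY$. For the reverse inclusion $\decompX \times \decompY \subseteq \memberPartEl$, I would take any $(u',v')$ with $\mem_{u'} = \mem$ and $\mem_{v'} = \mem'$; then the index sets $\{i : \memele_{u',i} = 1\}$ and $\{j : \memele_{v',j}=1\}$ coincide with those of $u_0$ and $v_0$, so $\ind_{u',v'} = \ind_{u_0,v_0}$, i.e., $(u',v') \memER (u_0,v_0)$, and thus $(u',v') \in \memberPartEl$.

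The only technical subtlety, and the step I expect to be the main obstacle, is going from equality of the Cartesian products $\{i : \memele_{u',i} = 1\} \times \{j : \memele_{v',j} = 1\} = \{i : \memele_{u_0,i} = 1\} \times \{j : \memele_{v_0,j} = 1\}$ back to equality of the individual factor sets (so that the membership vectors themselves must coincide). This is the standard fact that two Cartesian products of sets agree iff their factor sets do, provided none of the factors is empty. Emptiness is ruled out here because the grouping $\grouping$ covers $\nodes$, so every node belongs to at least one group, making both $\{i : \memele_{u,i} = 1\}$ and $\{j : \memele_{v,j} = 1\}$ non-empty for every $u, v \in \nodes$. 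Once this small lemma is in place, the two inclusions combine to give $\memberPartEl = \decompX \times \decompY$, completing the proof.
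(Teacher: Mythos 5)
Your proof is correct. The paper states this as a \emph{Fact} with no accompanying proof, treating it as immediate from its definition of the equivalence classes (which already glosses ``same membership indices'' as ``$\mem_u = \mem$ and $\mem_v = \mem'$ for some fixed $\mem, \mem'$''), so there is no paper argument to compare against; your double-inclusion argument is a faithful and more careful verification of exactly that identification. In particular, you correctly isolate the only non-trivial step --- recovering equality of the factor sets from equality of the Cartesian products $\ind_{u',v'} = \ind_{u_0,v_0}$ --- and correctly discharge it by noting that the covering condition $\bigcup_{\group_i \in \grouping} \group_i = \nodes$ forces every index set $\{i : \memele_{u,i}=1\}$ to be non-empty, which is precisely the hypothesis needed for a Cartesian product to determine its factors. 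Nothing is missing.
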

At a high level, to solve Prob.~\ref{prob:closest-pairs}, we hash each node embedding of the nodes in $\decompX$ and $\decompY$ using a locality sensitive hash function. We design the hash function, described next, such that the number of pairs that map to the same bucket is greater than $\target$, but as small as possible, to maximally prune pairs. 
Once the embeddings are hashed, we search the pairs in each hash bucket for the $\target$ closest.
We normalize the embeddings so that dot product is equivalent to cosine similarity, and use the Random Hyperplane LSH family \cite{charikar2002similarity}.

\begin{definition}[Random Hyperplane Hash Family] The random hyperplane hash family $\rhHashFamily$ is the set of hash functions $\rhHashFamily \triangleq \{\hash : \realsd \rightarrow \{0, 1\}\}$, where $\randomHype$ is a random $\dimension$-dimensional Gaussian unit vector and $\hash(\emb) \triangleq \begin{cases} 1 & \text{if } \randomHype^T \emb \geq 0 \\ 0 & \text{if } \randomHype^T \emb < 0 \end{cases}$. 
\end{definition}

This hash family is well-known to provide the property that the probability of two vectors colliding is a function of the degree of the angle between them \cite{bawa2005lsh}:
\begin{equation}
\small
    \pr(\hash(\emb_u) = \hash(\emb_v)) = 1 - \frac{\theta(\emb_u,\emb_v)}{\pi} = 1 - \frac{\arccos(\emb_u^T\emb_v)}{\pi} \nonumber,
\end{equation} \vspace{-0.02cm}
\noindent where the last equality holds due to normalized embeddings. 

To lower the false positive rate, it is conventional to form a new hash function by sampling $\numHashes$ hash functions from $\rhHashFamily$ and concatenating the hash codes: $\concatHash(\cdot) = (\hash_1(\cdot), \hash_2(\cdot), \dots, \hash_\numHashes(\cdot))$. The new hash function is from another LSH family:

\begin{definition}[$\numHashes$-AND-Random Hyperplane Hash Family] The $\numHashes$-AND-Random hyperplane hash family is the set of hash functions $\andHashFamily^{\numHashes} \triangleq \{\concatHash : \realsd \rightarrow \{0, 1\}^\numHashes\}$, where $\concatHash(\emb) = (\hash_1(\emb), \hash_2(\emb), \dots, \hash_\numHashes(\emb))$ is formed by concatenating $\numHashes$ randomly sampled hash functions $\hash(\cdot) \in \rhHashFamily$ for some $\numHashes \in \naturals$.
\end{definition}

Since the hash functions are sampled randomly from $\rhHashFamily$, 
\begin{equation}
\label{eq:concat-prob}
\small 
    \pr(\concatHash(\emb_u) = \concatHash(\emb_v)) = \left(1 - \frac{\arccos(\emb_u^T\emb_v)}{\pi}\right)^\numHashes. \nonumber
\end{equation}

Only vectors that are not split by all $\numHashes$ random hyperplanes end up with the same hash codes, so this process lowers the false positive rate. However, it also increases the false negative rate for the same reason. The conventional LSH-scheme then repeats the process $\numTrees$ times, computing the dot product exactly over all pairs that match in at least one $\numHashes$-dim hash code, in order to lower the false negative rate. The challenge of this approach is determining how to set $\numHashes$. To do so, we first define the hash buckets of a hash function, and their volume. 
\begin{definition}[Hash Buckets and Volume] 
Given an equivalence class $\memberPartEl = \decompX \times \decompY$ and a hash function \vspace{0.5cm} $\concatHash(\cdot) : \realsd \rightarrow \{0, 1\}^\numHashes$, after applying $\concatHash(\cdot)$ to all $v \in \decompX \cup \decompY$, a hash bucket 
$$\bucket = \{ u \in \decompX, v \in \decompY : \concatHash(u) = \concatHash(v) =  \bucket_{\text{hashcode}}\}$$
consists of subsets $\decompX^{(\bucket)} \subseteq \decompX, \decompY^{(\bucket)} \subseteq \decompY$ of nodes that mapped to hashcode $\bucket_{\text{hashcode}} \in \{0, 1\}^\numHashes$. The set of hash buckets $\buckets_\concatHash = \{\bucket : |\bucket| > 0\}$ consists of all non-empty buckets. 
We define the volume of the buckets as the number of pairs $(u, v)$ where $u$ and $v$ landed in the same bucket:
{\small
\begin{equation}
\small 
    \bucketVolume(\buckets_\concatHash) \triangleq |\{(u, v) : \concatHash(\emb_u) = \concatHash(\emb_v)\}| = \textstyle \sum_{\bucket \in \buckets_\concatHash} |\decompX^{(\bucket)} \times \decompY^{(\bucket)}|. \nonumber
\end{equation}
}
\end{definition}

Since we are after the $\target$ closest pairs, we want to find a hash function $\concatHash(\cdot)$ such that $\bucketVolume(\buckets_\concatHash) \geq \target$.
But since we want to search as few pairs as possible, we seek the value of $\numHashes$ that minimizes $\bucketVolume(\buckets_\concatHash)$ for some $\concatHash(\cdot) \in \andHashFamily^{\numHashes}$ subject to the constraint that $\bucketVolume(\buckets_\concatHash) \geq \target$.

Any hash function $\concatHash \in \andHashFamily^{\numHashes}$ corresponds to a binary prefix tree, like Fig.~\ref{fig:lsh-tree}. 
Each level of the tree corresponds to one
\begin{wrapfigure}{r}{0.35\linewidth}
\vspace{-0.4cm}
    \centering
    \includegraphics[width=\linewidth]{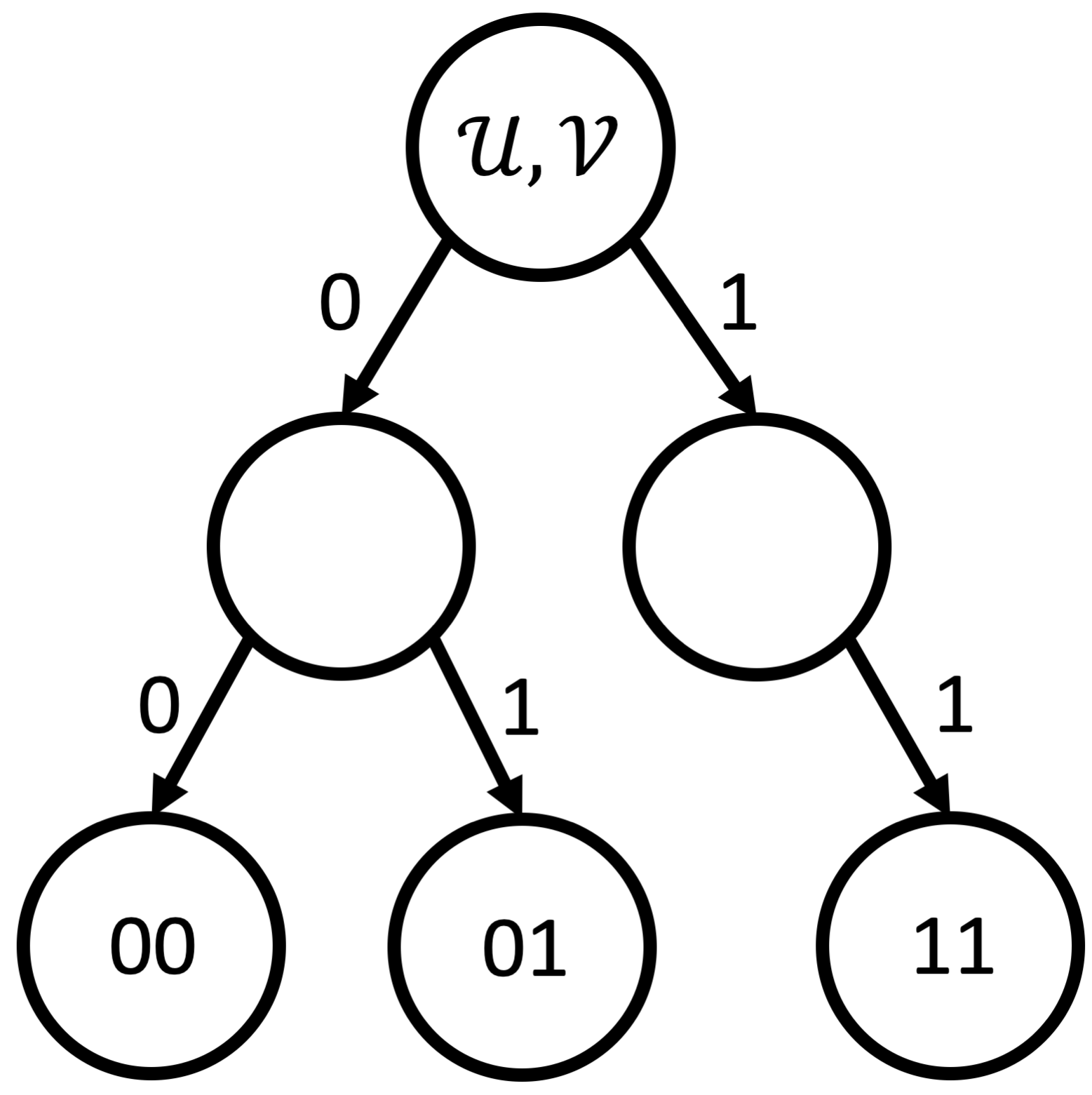}
    \vspace{-0.5cm}
    \caption{LSH Tree}
    \label{fig:lsh-tree}
    \vspace{-0.5cm}
\end{wrapfigure}
$\hash \in \rhHashFamily$, and the leaves correspond to the buckets $\buckets_\concatHash$. Thus, to automatically identify the best value of $\numHashes$, we can recursively grow the tree, branching each leaf with a new random hyperplane hash function $\hash \in \rhHashFamily$, until $\bucketVolume(\buckets_\concatHash) < \target$, then undo the last branch. At that point, the depth of the tree equals $\numHashes$, and is the largest value such that $\bucketVolume(\buckets_\concatHash) \geq \target$. To prevent this process from repeating indefinitely in edge cases, we halt the branching at a maximum depth $\numHashesMax$. This approach is closely related to LSH Forests \cite{bawa2005lsh}, but with some key differences, which we discuss below.

\begin{theorem}
Given a hash function $\concatHash \in \andHashFamily^{\numHashes}$, the $\target$ closest pairs in $\memberPartEl$ are the $\target$ most likely pairs to be in the same bucket: 
\resizebox{\columnwidth}{!}{$\pr(\concatHash(\emb_u) = \concatHash(\emb_v)) > \pr(\concatHash(\emb_{u'}) = \concatHash(\emb_{v'})) \iff \emb_u^T\emb_v > \emb_{u'}^T\emb_{v'}$}.
\end{theorem}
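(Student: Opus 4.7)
My plan is to reduce the claimed biconditional to the monotonicity of a single real-valued function of the dot product, using the collision probability formula already derived for $\andHashFamily^{\numHashes}$. Concretely, the excerpt establishes
\[
    \pr(\concatHash(\emb_u) = \concatHash(\emb_v)) = \Bigl(1 - \frac{\arccos(\emb_u^T \emb_v)}{\pi}\Bigr)^{\numHashes},
\]
so the theorem will follow once I show that the right-hand side is a strictly increasing function of $\emb_u^T \emb_v$ on the domain of normalized embeddings.

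\textbf{Key steps.} First I would observe that since \method normalizes all embeddings to unit length, $\emb_u^T \emb_v \in [-1, 1]$, which is exactly the domain on which $\arccos(\cdot)$ is defined and takes values in $[0, \pi]$. Second, I would note that $\arccos$ is strictly decreasing on $[-1, 1]$, so the map $x \mapsto 1 - \arccos(x)/\pi$ is strictly increasing on $[-1, 1]$ and takes values in $[0, 1]$. Third, since $\numHashes \in \naturals$ with $\numHashes \geq 1$, the map $y \mapsto y^{\numHashes}$ is strictly increasing on $[0, 1]$. Composing these two strictly increasing maps yields that $x \mapsto (1 - \arccos(x)/\pi)^{\numHashes}$ is strictly increasing in $x = \emb_u^T \emb_v$. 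Applying this to both pairs $(u, v)$ and $(u', v')$ gives the forward and reverse implications simultaneously, establishing the biconditional.

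\textbf{Expected obstacle.} There is essentially no hard step: once the collision probability formula is quoted and the composition-of-monotone-maps argument is laid out, the result is immediate. The only subtlety worth flagging is the edge case $\numHashes = 0$ (where the claim would fail because all pairs collide trivially) and the boundary $\emb_u^T\emb_v = -1$; both are excluded in practice, since $\numHashes \geq 1$ by construction of $\andHashFamily^{\numHashes}$ and distinct node embeddings almost never sit at antipodal points. I would simply state the normalization and $\numHashes \geq 1$ assumptions explicitly up front so that the monotonicity argument is clean, and then conclude in a single line that the ranking by collision probability coincides with the ranking by dot product.
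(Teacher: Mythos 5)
Your proposal is correct and follows essentially the same route as the paper's own proof: both quote the collision probability $(1 - \arccos(\emb_u^T\emb_v)/\pi)^{\numHashes}$ and invoke the monotonicity of $\arccos$ (composed with the power map) to conclude that collision probability is an increasing function of the dot product. Your added remarks on normalization, $\numHashes \geq 1$, and the boundary cases are a slightly more careful rendering of the same argument, not a different one.
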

\begin{proof} Since $\arccos(x)$ is a \emph{decreasing} function of $x$, Eq.~\eqref{eq:concat-prob} shows that $\pr(\concatHash(\emb_u) = \concatHash(\emb_v))$ is an \emph{increasing} function of $\emb_u^T\emb_v$. The result follows from this. 
\end{proof}
While $\emb_u^T\emb_v > \emb_{u'}^T\emb_{v'}$ implies that $(u,v)$ are \emph{more likely} than $(u',v')$ to be in the same bucket, it does not guarantee that this outcome will \emph{always} happen. Thus, we repeat the process $\numTrees$ times, creating $\numTrees$ binary prefix trees and, searching the pairs that fall in the same bucket in any tree for the top $\target$. Setting the $\numTrees$ parameter is considered of minor importance, as long as it is sufficiently large (e.g., 10) \cite{bawa2005lsh}.

\vspace{0.1cm}
\noindent \emph{Differences from LSH Forests \cite{bawa2005lsh}}. LSH Forests are designed for $KNN$-search, which seeks to return the nearest neighbors to a \emph{query vector}. In contrast, our approach is designed for $\target$-closest-pairs search, which seeks to return the $\target$ closest pairs in a set $\memberPartEl$. 
LSH Forests grow each tree until each vector is in its own leaf. 
We grow each tree until we reach the target bucket volume $\target$. LSH Forests allow variable length hash codes, since the nearest neighbors of different query vectors may be at different relative distances. All our leaves are at the same depth so that the probability of $(u, v)$ surviving together to the leaf is an increasing function of their dot product.  

\subsubsection{Avoiding Exhaustive Search for Heuristics}
\label{subsubsec:heuristic-avoid-all} 
For the heuristic definitions of proximity in Eqs.\eqref{eq:simi-cn}-\eqref{eq:simi-aa}, there are two approaches to solving Prob.~\ref{prob:closest-pairs}. The first is to construct embeddings from the \cn and \adad scores (this does not apply to \js). For \cn, if we let the node embeddings be their corresponding rows in the adjacency matrix, i.e, $\embs_{\cn} = \adj$, then $\simi_{\cn}(u, v) = \emb_u^T\emb_v$. Similarly, $\embs_{\adad} = \adj \cdot 1/\sqrt{\log(\mathbf{D})}$, yields $\simi_{\adad}(u, v) = \emb_u^T\emb_v$, where $\mathbf{D}$ is a diagonal matrix recording the degree of each node. Thus, the LSH solution just described can be applied. The second approach uses the fact that all three heuristics are defined over the 1-hop neighborhoods of nodes $(u, v)$. Thus, to have nonzero proximity, $(u, v)$ must be within 2-hops of each other, and any pairs not within 2-hops can implicitly be ignored.
\label{subsubsec:bailout}

\subsubsection{Bail Out Refinement}
To this point we have assumed that the proximity model used in \method is highly informative and accurate. However, in reality, heuristics may not be informative for all equivalence classes, and even learned, latent proximity models, can fail to encode adequate information. For instance, it is challenging to learn high-quality representations for low-degree nodes. 
Thus, we introduce a refinement to \method that automatically identifies when a proximity model is uninformative in an equivalence class, and allows it to \emph{bail out} of searching that equivalence class.

\vspace{0.1cm}
\noindent \emph{Proximity Model Error.} The error that a proximity model makes is the probability $\pr(\simi(u, v) < \simi(u', v'))$ that it gives a higher proximity for some unlinked pair $(u', v') \notin \edges$ than for some linked pair $(u, v) \in \edges$.

By this definition of error, we expect strong proximity models to mostly assign higher proximity between observed edges than future or missing edges: $\pr(\simi(u, v) > \simi(u', v')) \approx 1$ for some $(u, v) \in \edges$ and $(u', v') \in \newEdges$. Thus, on our way to finding the top-$\target$ most similar (unlinked) pairs in an equivalence class (Problem~\ref{prob:closest-pairs}), we expect to encounter a majority of the \emph{observed} edges (linked pairs) $|\edges \cap \memberPartEl|$ that fall in that class. For a user-specified error tolerance $\bailoutTol$, \method will bail out and return no pairs from any equivalence class where less than $\bailoutTol$ fraction of its observed edges are encountered on the way to finding the $\target$ most similar unlinked pairs. 
\method keeps track of how many pairs were skipped by bailing out, and replaces them (after step \sFour) by adding to $\pairs$ the top-ranked pairs of a heuristic (e.g., \adad).

\subsection{Augmenting Pairs from Global Pool (\sFour)}
\label{subsec:global}
Since \method returns a standard deviation below the expected number of new pairs in each equivalence class, it chooses the remaining pairs up to $\budget$ from $\globalPool$. To do so, it considers pairs in descending order on the input similarity function $\simi(\cdot, \cdot)$, and greedily adds to $\pairs$ until $|\pairs| = \budget$.

\section{Evaluation}
\label{sec:evaluation}
We evaluate \method on three research questions: 
(\rOne)~Does the set $\pairs$ returned by \method have high recall and precision? 
(\rTwo)~Is \method scalable?
(\rThree)~How do parameters affect performance?

\begin{table}[t]
    \centering
    \caption{Datasets statistics: if the graph is temporal or static, density, degree assortativity \cite{newman2003mixing}, and number of nodes and edges.}
    \label{tab:datasets}
    \resizebox{\columnwidth}{!}{
    \begin{tabular}{lccccc}
         \toprule
         Graph & Time & Density & Assortativity & $n$ & $m$\\
         \midrule
         \yeast & - & 0.41\% & 0.4539 & 2,375 & 11,693 \\
         \dblp & - & 0.06\% & -0.0458 & 12,595 & 49,638 \\
         \fbOne & - & 1.08\% & 0.0636 & 4,041 & 88,235 \\
         \movie & \checkmark & 2.90\% & -0.2268 & 2,627 & 100,000 \\
         \protein & - & 0.74\% & 0.2483 & 6,329 & 147,548 \\
         \arxiv & - & 0.11\% & 0.2051 & 18,772 & 198,110 \\
         \mathOverflow & \checkmark & 0.06\% & -0.1979 & 24,820 & 199,974 \\
         \enron & \checkmark & 0.01\% & -0.1667 & 87,275 & 299,221 \\
         \reddit & \checkmark & 0.01\% & -0.1278 & 67,180 & 309,667 \\
         \epinions & - & 0.01\% & -0.0406 & 75,881 & 405,741 \\
         \fbTwo & \checkmark & 0.04\% & 0.1770 & 63,733 & 817,063\\
         \digg & \checkmark & $< 0.01\%$ & -0.0557 & 279,376 & 1,546,541 \\
         \proteinAlt & - & 1.64\% & -0.0192 & 45,116 & 16,691,679 \\
         \bottomrule
    \end{tabular}
    }
    \vspace{-0.4cm}
\end{table}

\subsection{Data \& Setup}
\label{sec:data}
We evaluate \method on a large, diverse set of networks: metabolic, social, communication, and information networks. Moreover, we include datasets to evaluate in both LP scenarios: (1) returning possible \emph{missing} links in static graphs and (2) returning possible \emph{future} links in temporal graphs. We treat all graphs as undirected. 

\vspace{0.1cm}
\noindent \emph{Metabolic.} \yeastCite, \proteinCite, and \proteinAltCite are metabolic protein networks, where edges denote known associations between proteins in different species. \yeast contains proteins in a species of yeast, \protein in human beings, and \proteinAlt in Glycine max (soybeans).

\vspace{0.1cm}
\noindent \emph{Social.} \fbOneCite and \fbTwoCite capture friendships on Facebook, \redditCite encodes links between subreddits (topical discussion boards), edges in \epinionsCite connect users who trust each other's opinions, \mathOverflowCite captures comments and answers on math-related questions and comments (e.g., user $u$ answered user $v$'s question), \diggCite captures friendships among users.

\vspace{0.1cm}
\noindent \emph{Communication.} \enronCite is an email network, capturing emails sent during the collapse of the Enron energy company.

\vspace{0.1cm}
\noindent \emph{Information.} \dblpCite is a citation network, and \arxivCite is a co-authorship network of Astrophysicists. \movieCite is bipartite graph of users rating movies for the research project \movie. Edges encode users and the movies that they rated.

\vspace{0.1cm}
\noindent \emph{Training Graph and Ground Truth.} While using \method in practice does not require a test set, in order to know how effective it is, we must \emph{evaluate} it on ground truth missing links. As ground truth, we remove 20\% of the edges. In the static graphs, we remove 20\% at random. In the temporal graphs, we remove the 20\% of edges with the most recent timestamps. If either of the nodes in the removed edge is not present in the training graph, we discard the edge from the ground-truth. The graph with these edges removed is the training graph, which \method and the baselines observe when choosing the set of unlinked pairs to return.

\vspace{0.1cm}
\noindent \emph{Setup.} We discuss in \S~\ref{subsec:params} how we choose which groupings to use, and how many groups in each.
Whenever used, we implement \groupingTwo and \groupingThree by clustering embeddings with KMeans: \xnetmf \cite{heimann2018regal} and NetMF \cite{qiu2018network} (window size 1), respectively.
In LSH, we set the maximum tree depth dynamically based on the size of an equivalence class: $\numHashesMax = 12$ if $|\memberPartEl| < 1B$, $\numHashesMax = 15$ if $|\memberPartEl| < 10B$, $\numHashesMax = 20$ $|\memberPartEl| < 25B$, $\numHashesMax = 30$ otherwise. We set the number of trees $\numTrees$ based on the fraction of $|\memberPartEl|$ that we seek to return: $\numTrees = 5$ if $\target / |\memberPartEl| < 0.0001$, $\numTrees = 10$ if $\target / |\memberPartEl| < 0.001$ and $\numTrees = 25$ otherwise.
\subsection{Recall and Precision (\rOne)}

\noindent \textbf{Task Setup.} We evaluate how effectively \method returns in $\pairs$ the ground-truth missing links, 
at values of $\budget$ much smaller than $\numNodes^2$. We report $\budget$, chosen
based on dataset size, in Tab.~\ref{table:recall}, and discuss effects of the choice in the appendix. We compare the set \method returns to those of five baselines, and evaluate both \methodDeg, which uses grouping \groupingOne, and \methodMulti, which uses \groupingOne, \groupingTwo, and \groupingThree together. In both \method variants, we consider the following proximities (cf.~\ref{subsec:prox}) as input, and report the results that are best: \lpm using \netmf \cite{qiu2018network} embeddings (window sizes 1 and 2), and \adad, the 
best 
heuristic proximity. For the bipartite \movie, we use \bine \cite{gao2018bine}, an embedding method designed for bipartite graphs. We report the input proximity model for each dataset in Tab.~\ref{tab:params} in the appendix. We set the exact-search and bailout tolerances to $\toler = 25M$ and $\bailoutTol = 0.5$, which we determined via a parameter study in \S~\ref{subsec:params}. Results are averages over five random seeds (\S~\ref{sec:data}): for static graphs, the randomly-removed edges are different for each seed; for temporal graphs, the latest edges are always removed, so the LSH hash functions are the main source of randomness.

\vspace{0.1cm}
\noindent \textbf{Metrics.} We use Recall (R@$\budget$), the fraction of known missing/future links that are in the size-$\budget$ set returned by the method, and Precision (P@$\budget$), the fraction of the $\budget$ pairs that are known to be missing/future links. Recall is a more important metric, since (1) the returned set of pairs $\pairs$ does not contain final predictions, but rather pairs for a LP method to make final decisions about, and (2) our real-world graphs are inherently incomplete, and thus pairs returned that are not \emph{known} to be missing links, could nonetheless be missing in the original dataset prior to ground-truth removal (i.e., the open-world assumption~\cite{safavi2020evaluating}).  
We report both in Table~\ref{table:recall}.

\vspace{0.1cm}
\noindent \textbf{Baselines.} We use five baselines.  
\textbf{\ensemble}~\cite{duan2017ensemble} uses non-negative matrix factorization (NMF) and a bagging ensemble to return $\budget$ pairs while pruning the search space. We use their reported strongest version: the \emph{Biased Edge Bagging} version with \emph{Node Uptake} and \emph{Edge Filter} optimizations (\textit{Biased(NMF+)}). We use the authors' recommended parameters when possible: $\epsilon = 1$, $\mu = 0.1$, $f = 0.1$, $\rho = 0.75$, number of latent factors $d = 50$, and ensemble size $\mu/f^2$. In some cases, these suggested parameters led to fewer than $\budget$ pairs being returned, in which case we tweaked the values of $\epsilon, \mu$, and $f$ until $\budget$ were returned. We report these deviations in Tab.~\ref{tab:params} in the appendix. We use our own implementation.

We also use four proximity models, which we showed to be special cases of \ourModel in \S~\ref{subsec:pm-special-case}: 
\noindent \textbf{\lpm} ranks pairs \emph{globally} based on the dot product of their embeddings, and returns the top $\budget$. To avoid searching all-pairs, we use the same LSH scheme that we introduce in \S~\ref{subsec:closest-pairs} for \method. We set $\numTrees = 25$, and like \method, use \netmf with a window size of 1 or 2, except for \movie, where we use \bine.

\noindent \textbf{\js}, \textbf{\cn}, and \textbf{\adad} are defined in \ref{subsec:prox}. We exploit the property described in~\ref{subsubsec:heuristic-avoid-all}---i.e., all these scores are zero for nodes beyond two hops. We compute the scores for all nodes within two hops, and return the top $\budget$ unlinked pairs.

\begin{table*}[t]
        \centering
    	\caption{On each dataset, we highlight the cell of the top-performing method with bold text and a gray background, and the second best with bold text only. An ``*'' denotes statistical significance at a 0.05 $p$-value in a paired t-test. The ``**'' means that the difference between the better performing variant of \method was also significantly better than the other variant at the same $p$-value. Under each dataset, we give the percentage of the quadratic search space that the value of $\budget$ corresponds to (usually $< 1\%$). On average, \methodMulti is the best-performing method, and \methodDeg the second best.}
    	\label{table:recall}
    	\resizebox{\textwidth}{!}{
        \begin{tabular}{clcccccccc}
    		\toprule
    		\emph{Dataset} & \emph{Metric} & \ensemble \cite{duan2017ensemble} & \lpm & \js & \cn & \adad & \methodDeg & \methodMulti \\
    		\toprule
    		\yeast & R@10K & 0.4078 $\pm$ 0.01 & 0.4400 $\pm$ 0.01 & 0.4766 $\pm$ 0.01 & 0.6142 $\pm$ 0.01 & 0.6590 $\pm$ 0.01 & \textbf{0.6762 $\pm$ 0.01} & \cellcolor{gray!25}\textbf{0.6926** $\pm$ 0.01} \\
            0.39\% & P@10K & 0.0898 $\pm$ 0.00 & 0.0969 $\pm$ 0.00 & 0.1049 $\pm$ 0.00 & 0.1352 $\pm$ 0.00 & 0.1451 $\pm$ 0.00 & \textbf{0.1489 $\pm$ 0.00} & \cellcolor{gray!25}\textbf{0.1525** $\pm$ 0.00} \\
            \cmidrule{1-9}
            \dblp & R@100K & 0.2319 $\pm$ 0.00 & 0.0927 $\pm$ 0.00 & 0.0389 $\pm$ 0.00 & 0.3379 $\pm$ 0.00 & 0.3775 $\pm$ 0.00 & \textbf{0.4270 $\pm$ 0.00} & \cellcolor{gray!25}\textbf{0.4271* $\pm$ 0.00} \\
            0.15\% & P@100K & 0.0204 $\pm$ 0.00 & 0.0082 $\pm$ 0.00 & 0.0034 $\pm$ 0.00 & 0.0298 $\pm$ 0.00 & 0.0332 $\pm$ 0.00 & \cellcolor{gray!25}\textbf{0.0376* $\pm$ 0.00} & \cellcolor{gray!25}\textbf{0.0376* $\pm$ 0.00} \\
            \cmidrule{1-9}
            \fbOne & R@100K & 0.4036 $\pm$ 0.02 & 0.8005 $\pm$ 0.00 & 0.8244 $\pm$ 0.00 & 0.8547 $\pm$ 0.00 & 0.8863 $\pm$ 0.00 & \textbf{0.8975 $\pm$ 0.00} & \cellcolor{gray!25}\textbf{0.9059** $\pm$ 0.00} \\
            1.34\% & P@100K & 0.0711 $\pm$ 0.00 & 0.1410 $\pm$ 0.00 & 0.1453 $\pm$ 0.00 & 0.1506 $\pm$ 0.00 & 0.1562 $\pm$ 0.00 & \textbf{0.1581 $\pm$ 0.00} & \cellcolor{gray!25}\textbf{0.1596** $\pm$ 0.00} \\
            \cmidrule{1-9}
            \movie & R@100K & 0.1221 $\pm$ 0.02 & 0.2096 $\pm$ 0.01 & 0.0000 $\pm$ 0.00 & 0.0000 $\pm$ 0.00 & 0.0000 $\pm$ 0.00 & \textbf{0.1667 $\pm$ 0.01} & \cellcolor{gray!25}\textbf{0.3662** $\pm$ 0.01} \\
            3.57\% & P@100K & 0.0035 $\pm$ 0.00 & 0.0060 $\pm$ 0.00 & 0.0000 $\pm$ 0.00 & 0.0000 $\pm$ 0.00 & 0.0000 $\pm$ 0.00 & \textbf{0.0048 $\pm$ 0.00} & \cellcolor{gray!25}\textbf{0.0105** $\pm$ 0.00} \\
            \cmidrule{1-9}
            \protein & R@100K & 0.5127 $\pm$ 0.02 & 0.4033 $\pm$ 0.01 & 0.4768 $\pm$ 0.00 & 0.7998 $\pm$ 0.00 & 0.8429 $\pm$ 0.00 & \textbf{0.8878 $\pm$ 0.00} & \cellcolor{gray!25}\textbf{0.9038** $\pm$ 0.00} \\
            0.52\% & P@100K & 0.1506 $\pm$ 0.00 & 0.1185 $\pm$ 0.00 & 0.1400 $\pm$ 0.00 & 0.2349 $\pm$ 0.00 & 0.2476 $\pm$ 0.00 & \textbf{0.2608 $\pm$ 0.00} & \cellcolor{gray!25}\textbf{0.2655** $\pm$ 0.00} \\
            \cmidrule{1-9}
            \arxiv & R@100K & 0.2877 $\pm$ 0.00 & 0.2584 $\pm$ 0.00 & 0.5149 $\pm$ 0.00 & 0.5576 $\pm$ 0.00 & 0.6539 $\pm$ 0.00 & \textbf{0.7004 $\pm$ 0.00} & \cellcolor{gray!25}\textbf{0.7032** $\pm$ 0.00} \\
            0.06\% & P@100K & 0.2017 $\pm$ 0.00 & 0.1812 $\pm$ 0.00 & 0.3610 $\pm$ 0.00 & 0.3909 $\pm$ 0.00 & 0.4585 $\pm$ 0.00 & \textbf{0.4911 $\pm$ 0.00} &  \cellcolor{gray!25}\textbf{0.4930** $\pm$ 0.00} \\
            \cmidrule{1-9}
            \mathOverflow & R@1M & 0.3901 $\pm$ 0.00 & 0.0279 $\pm$ 0.00 & 0.0084 $\pm$ 0.00 & 0.4201 $\pm$ 0.00 & \cellcolor{gray!25}\textbf{0.4333* $\pm$ 0.00} &\textbf{ 0.4227 $\pm$ 0.00} & 0.4233 $\pm$ 0.00 \\
            0.55\% & P@1M & 0.0070 $\pm$ 0.00 & 0.0005 $\pm$ 0.00 & 0.0001 $\pm$ 0.00 & 0.0075 $\pm$ 0.00 & \cellcolor{gray!25}\textbf{0.0078* $\pm$ 0.00} & \textbf{0.0076 $\pm$ 0.00} & \textbf{0.0076 $\pm$ 0.00}\\
            \cmidrule{1-9}
            \enron & R@1M & 0.2551 $\pm$ 0.00 & 0.0008 $\pm$ 0.00 & 0.0004 $\pm$ 0.00 & 0.3071 $\pm$ 0.00 & \cellcolor{gray!25}\textbf{0.3209* $\pm$ 0.00} & \textbf{0.3166 $\pm$ 0.00} & 0.3080 $\pm$ 0.00 \\
            0.04\% & P@1M & 0.0089 $\pm$ 0.00 & 0.0000 $\pm$ 0.00 & 0.0000 $\pm$ 0.00 & 0.0107 $\pm$ 0.00 & \cellcolor{gray!25}\textbf{0.0112* $\pm$ 0.00} & \textbf{0.0111 $\pm$ 0.00} & 0.0108 $\pm$ 0.00 \\
            \cmidrule{1-9}
            \reddit & R@1M & 0.3165 $\pm$ 0.00 & 0.0015 $\pm$ 0.00 & 0.0006 $\pm$ 0.00 & 0.3796 $\pm$ 0.00 & 0.4038 $\pm$ 0.00 & \textbf{0.4096 $\pm$ 0.00} &  \cellcolor{gray!25}\textbf{0.4186** $\pm$ 0.00} \\
            0.06\% & P@1M & 0.0130 $\pm$ 0.00 & 0.0001 $\pm$ 0.00 & 0.0000 $\pm$ 0.00 & 0.0156 $\pm$ 0.00 & 0.0166 $\pm$ 0.00 & \textbf{0.0168 $\pm$ 0.00} & \cellcolor{gray!25}\cellcolor{gray!25}\textbf{0.0172** $\pm$ 0.00} \\
            \cmidrule{1-9}
            \epinions & R@1M & 0.3312 $\pm$ 0.00 & 0.0371 $\pm$ 0.00 & 0.0383 $\pm$ 0.00 & 0.3871 $\pm$ 0.00 & 0.4291 $\pm$ 0.00 & \textbf{0.4292 $\pm$ 0.00} & \cellcolor{gray!25}\textbf{0.4323** $\pm$ 0.00} \\
            0.04\% & P@1M & 0.0242 $\pm$ 0.00 & 0.0027 $\pm$ 0.00 & 0.0028 $\pm$ 0.00 & 0.0283 $\pm$ 0.00 & \textbf{0.0313 $\pm$ 0.00} & \textbf{0.0313 $\pm$ 0.00} & \cellcolor{gray!25}\textbf{0.0316** $\pm$ 0.00} \\
            \cmidrule{1-9}
            \fbTwo & R@1M & 0.0948 $\pm$ 0.00 & 0.1947 $\pm$ 0.00 & 0.3605 $\pm$ 0.00 & 0.2439 $\pm$ 0.00 & 0.2832 $\pm$ 0.00 & \cellcolor{gray!25}\textbf{0.3796**} $\pm$ 0.00 & \textbf{0.3776 $\pm$ 0.00} \\
            0.06\% & P@1M & 0.0145 $\pm$ 0.00 & 0.0298 $\pm$ 0.00 & 0.0551 $\pm$ 0.00 & 0.0373 $\pm$ 0.00 & 0.0433 $\pm$ 0.00 & \cellcolor{gray!25}\textbf{0.0580** $\pm$ 0.00} & \textbf{0.0577 $\pm$ 0.00} \\
            \cmidrule{1-9}
            \digg & R@10M & 0.2459 $\pm$ 0.00 & 0.0035 $\pm$ 0.00 & 0.0015 $\pm$ 0.00 & 0.2952 $\pm$ 0.00 & \cellcolor{gray!25}\textbf{0.3066* $\pm$ 0.00} & \textbf{0.3053 $\pm$ 0.00} & 0.3052 $\pm$ 0.00\\
            0.04\% & P@10M & 0.0032 $\pm$ 0.00 & 0.0000 $\pm$ 0.00 & 0.0000 $\pm$ 0.00 & 0.0038 $\pm$ 0.00 & \cellcolor{gray!25}\textbf{0.0040* $\pm$ 0.00} & \cellcolor{gray!25}\textbf{0.0040* $\pm$ 0.00} & \cellcolor{gray!25}\textbf{0.0040* $\pm$ 0.00} \\
            \cmidrule{1-9}
            \proteinAlt & R@10M & 0.3624 $\pm$ 0.02 & 0.1225 $\pm$ 0.00 & 0.2792 $\pm$ 0.00 & 0.3573 $\pm$ 0.00 & 0.3636 $\pm$ 0.00 & \textbf{0.5781 $\pm$ 0.00} & \cellcolor{gray!25}\textbf{0.6016* $\pm$ 0.03} \\
            0.99\% & P@10M & 0.2178 $\pm$ 0.01 & 0.0736 $\pm$ 0.00 & 0.1678 $\pm$ 0.00 & 0.2147 $\pm$ 0.00 & 0.2185 $\pm$ 0.00 & \textbf{0.3473 $\pm$ 0.00} & \cellcolor{gray!25}\textbf{0.3615* $\pm$ 0.02} \\
            \midrule
            & \centering \textbf{Avg R} & 0.3048 & 0.1994 & 0.2323 & 0.4273 & 0.4585 & \textbf{0.5074} & \cellcolor{gray!25}\textbf{0.5281}\\
            & \centering \textbf{Avg P} & 0.0635 & 0.0506 & 0.0754 & 0.0969 & 0.1056 & \textbf{0.1213} & \cellcolor{gray!25}\textbf{0.1238}\\
      	    \bottomrule
    	\end{tabular}
}
\end{table*}

\vspace{0.1cm}
\noindent \textbf{Results.} Across the 13 datasets, \method is the best performing method on 10, in both recall and precision. The \methodMulti variant is slightly stronger than \methodDeg, but the small gap between the two demonstrates that even simple node groupings can lead to strong improvements over baselines. \method generalizes well across the diverse types of networks. In contrast, the heuristics perform well on social networks, but not as well on, e.g., metabolic networks (\yeast, \protein, and \proteinAlt). Furthermore, the heuristic baselines cannot extend to bipartite graphs like \movie, because fundamentally, all links form between nodes more than one hop away. These observations demonstrate the value of learning from the observed links, which \method does via resemblance. We also observe that heuristic definitions of similarity, such as \adad, outperform latent embeddings (\lpm) that capture proximity. We conjecture that the embedding methods are more sensitive to the massive skew of the data, because even random vectors in high-dimensional space can end up with some level of proximity, due to the curse of dimensionality. This suggests that the standard approach of evaluating on a balanced test set may artificially inflate results.

In the three datasets where \method does not outperform \adad, it is only outperformed by a small margin. Furthermore, the four datasets with the largest total Variation distances between $\testDist$ and $\trainDist$ are \movie, \mathOverflow, \enron, and \digg. Theorem~\ref{thm:err-exp-val} suggests that \method may incur the most error in these datasets. Indeed, these are the only three datasets where \method fails to outperform all other methods (with \movie being bipartite, as discussed above). While the performance on temporal networks is strong, the higher total Variation distance suggests that the assumption that $\trainDist = \testDist$ may sometimes be violated due to \emph{concept drift} \cite{belth2020mining}. Thus, a promising future research direction is to use the timestamps of observed edges to predict roadmap drift over time, in order to more accurately estimate the future roadmap.

\subsection{Scalability (\rTwo)}
\noindent \textbf{Task Setup.} We evaluate how \method scales with the number of edges, and the number of nodes in a graph by running \method with fixed parameters on all datasets. We set $\budget = 1M$, use \netmf (window-size of 1) as $\simi(\cdot, \cdot)$, and do not perform bailout ($\bailoutTol = 0$). All other parameters are identical to $\rOne$. We use our Python implementation on an Intel(R) Xeon(R) CPU E5-2697 v3, 2.60GHz with 1TB RAM.

\vspace{0.1cm}
\noindent \textbf{Results.} The results in Fig.~\ref{fig:runtime} demonstrate that in practice, \method scales linearly on the number of edges, and sub-quadratically on the number of nodes.

\begin{figure}[t]
\vspace{-0.25cm}
    \centering
    \includegraphics[width=.9\columnwidth]{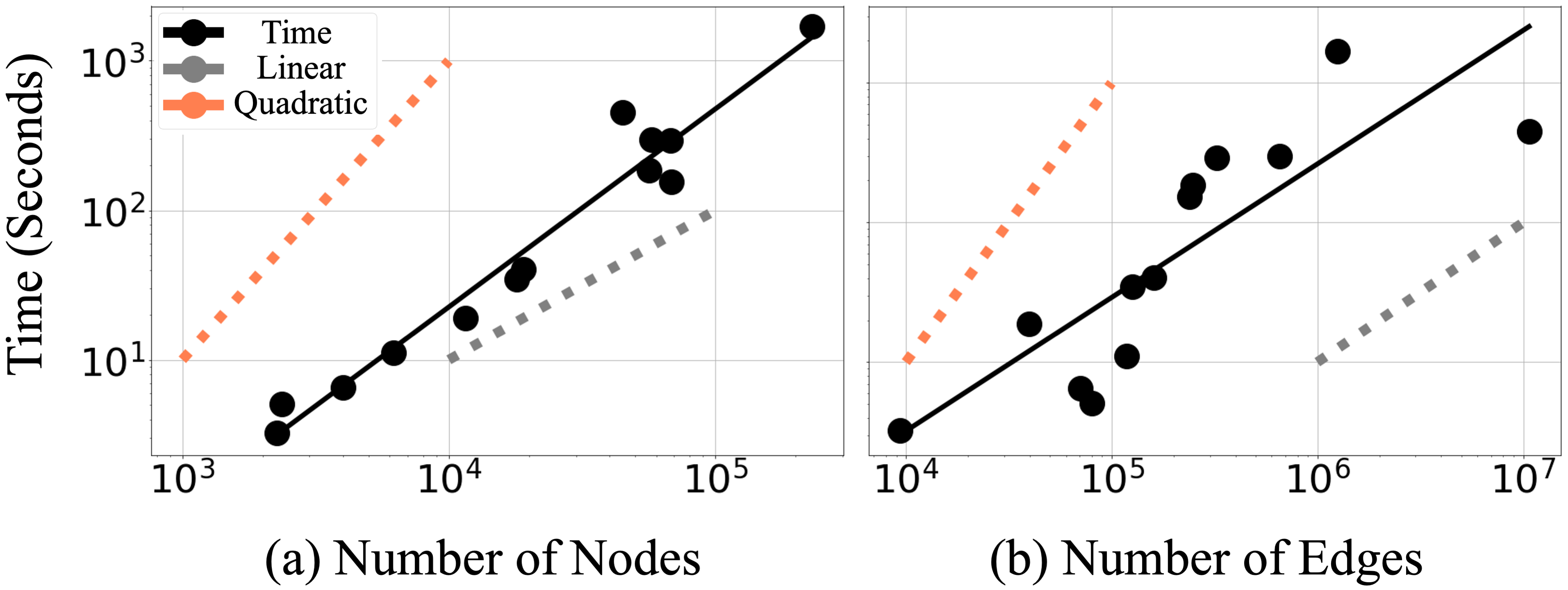}
    \caption{\method is sub-quadratic on the number of nodes (a) and linear on the number of edges (b).}
    \label{fig:runtime}
    \vspace{-0.6cm}
\end{figure}
\subsection{Parameters (\rThree)}
\label{subsec:params}
\noindent \textbf{Setup.} We evaluate the quality of different groupings (\S~\ref{subsec:groupings}), and how the number of groups in each affects performance. On four graphs, \yeast, \arxiv, \reddit, and \epinions, we run \method with groupings \groupingOne, \groupingTwo, and \groupingThree, varying the number of groups $|\grouping| \in \{5, 10, 25, 50, 75, 100\}$. We also investigate pairs of groupings, and the combination of all three groupings, via grid search of the number of groupings in each. We also evaluated $\toler \in \{1M, 10M, 25M, 50M\}$, the tolerance for searching equivalence classes exactly vs. approximately with LSH, and $\bailoutTol \in \{0, 0.1, 0.25, 1/3, 0.5, 2/3\}$, the fraction of training pairs we allow the proximity function to miss before we bailout of an equivalence class.

\vspace{0.1cm}
\noindent \textbf{Results.} The results for the individual groupings are shown in Fig.~\ref{fig:num-groups}. Grouping by log-binning nodes based on their degree, (i.e., \groupingOne) is in general the strongest grouping. Across all three groupings, we find that $|\grouping| = 25$ is a good number of groups. We found that using all three groupings was the best combination, with 25 log-bins, 5 structural clusters, and 5 communities (we omit the figures for brevity). For individual groupings, we observe diminishing returns, and in multiple groupings, slightly diminished performance when the number of groups in each grows large.  We omit the figures for $\toler$ and $\bailoutTol$, but found that $\toler = 25M$ and $\bailoutTol = 0.5$ were the best parameters. 

\begin{figure}[t]
\vspace{-0.25cm}
    \centering
    \includegraphics[width=\columnwidth]{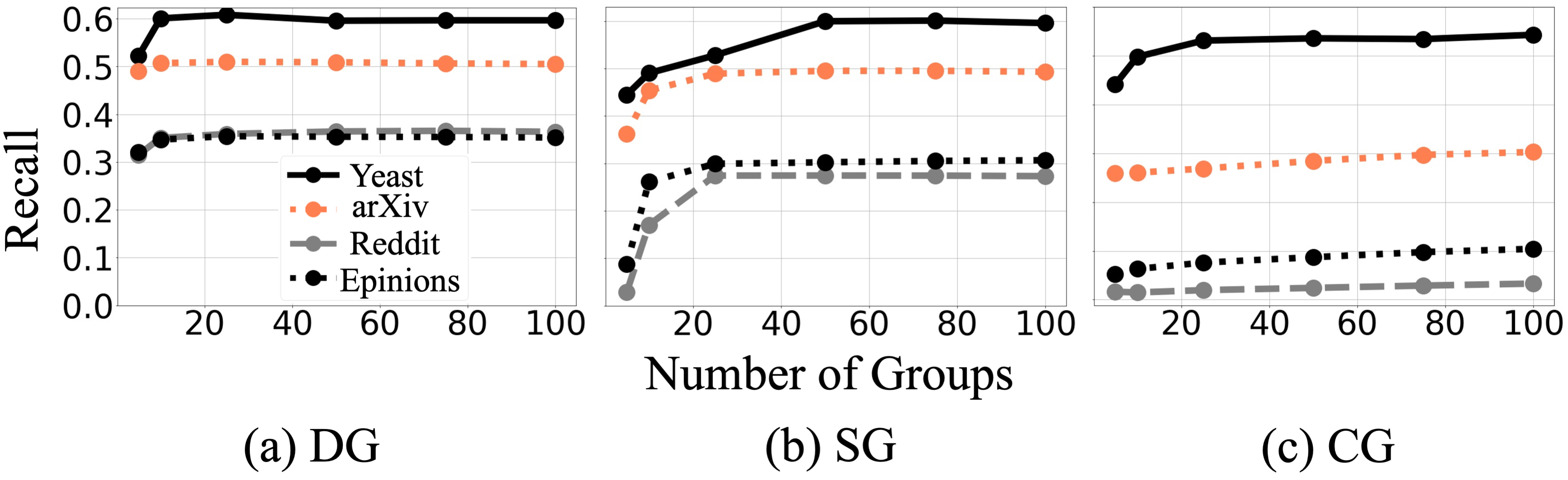}
    \caption{Number of groups for groupings \groupingOne, \groupingTwo, and \groupingThree}
    \label{fig:num-groups}
    \vspace{-0.6cm}
\end{figure}\section{Conclusion}
In this paper, we focus on the under-studied and challenging problem of identifying a moderately-sized set of node pairs
for a link prediction method to make decisions about. 
We mitigate the vastness of the search-space, filled with mostly non-links, by considering not just proximity, but also how much a pair of nodes \textit{resembles} observed links. We formalize this idea in the Future Link Location Model, show its theoretical connections to stochastic block models and proximity models, and introduce
an algorithm, \method, that leverages it to return high-recall candidate sets, with only a tiny fraction of all pairs. 
Via our resemblance insight, \method's strong performance generalizes from social networks to protein networks. Future directions include investigating the directionality of links, since the roadmap can incorporate this information, and extending to heterogeneous graphs with many edge and node types, like knowledge graphs.
\section*{Acknowledgements}
{\small
This work is supported by
an NSF GRF, NSF Grant No.\ IIS 1845491, Army Young Investigator Award No.\ W9-11NF1810397, 
and Adobe, Amazon, and Google faculty awards.
}

\balance
\bibliographystyle{plain}
\bibliography{abbrev,references}

\vspace{-0.1cm}
\appendix

\subsection{Effect of $\budget$} 
\label{subsec:effect-budget}
Table~\ref{table:vary-k} gives results (avgs over 3 seeds) for multiple values of $\budget$.
For reasonable values of $\budget$---roughly up to an order of magnitude greater than $\numEdges$---results are mostly stable. The main exception is for small $\budget$, where \ensemble performs well in some cases. This is consistent with its design: to return a small, accurate set of top-$\budget$ predictions, rather than a candidate set.

\vspace{-0.3cm}
\begin{table}[b!]
        \centering
    	\caption{We report ``$< \budget$'' if 
    	fewer than $\budget$ pairs are returned.
    	Black cells indicate values of $\budget$ outside the scale of the dataset.}
    	\vspace{-0.12cm}
    	\label{table:vary-k}
    	\resizebox{\columnwidth}{!}{
    	\begin{tabular}{lccc|ccc}
    		\toprule
    		& \multicolumn{3}{c}{\protein} & \multicolumn{3}{c}{\fbTwo} \\
    		\toprule
    		\emph{Metric} & \tiny{\ensemble} & \tiny{\adad} & \tiny{\methodMulti} & \tiny{\ensemble} & \tiny{\adad} & \tiny{\methodMulti} \\
    		\toprule
            R@10K & \textbf{0.2629} & 0.1909 & 0.2251 & 0.0093 & 0.0103 & \textbf{0.0172}\\
            \midrule
            R@100K & $< \budget$ & 0.8443 & \textbf{0.9035} & 0.0422 & 0.0672 & \textbf{0.1106}\\
            \midrule
            R@1M & $< \budget$ & 0.9747 & \textbf{0.9847} & 0.0970 & 0.2832 & \textbf{0.3781}\\
            \midrule
            R@5M & \cellcolor{black} N/A & \cellcolor{black} N/A & \cellcolor{black} N/A & $< \budget$ & 0.5561 & \textbf{0.5762}\\
      	    \bottomrule
    	\end{tabular}
    	}
\vspace{-0.1cm}
\end{table}

\begin{table}[b!]
    \centering
    \caption{Input Proximity Model for \lpm and \method, and parameter deviations from default for \ensemble.}
    \label{tab:params}
    \vspace{-0.12cm}
    \resizebox{\columnwidth}{!}{
    \begin{tabular}{lcccc}
         \toprule
         Graph & \lpm & \methodDeg & \methodMulti & \ensemble \\
         \midrule
         \yeast & \netmfTwo & \netmfTwo & \netmfTwo & $\epsilon = 0.5$ \\
         \dblp & \netmfTwo & \netmfTwo & \netmfTwo & Default\\
         \fbOne & \netmfOne & \adad & \adad & $\epsilon = 0.1$\\
         \movie & \bine & \bine & \bine & $\epsilon = 0.05$ \\
         \protein & \netmfTwo & \netmfTwo & \netmfTwo & $\epsilon = 0.75$ \\
         \arxiv & \netmfTwo & \adad & \adad & Default\\
         \mathOverflow & \netmfTwo & \netmfOne & \netmfOne & $\epsilon, \mu, f = 0.5, 0.3, 0.3$\\
         \enron & \netmfTwo & \netmfOne & \adad & Default\\
         \reddit & \netmfTwo & \adad & \adad & Default\\
         \epinions & \netmfOne & \adad & \adad & Default\\
         \fbTwo & \netmfTwo & \adad & \adad & Default\\
         \digg & \netmfTwo & \netmfOne & \netmfOne & Default\\
         \proteinAlt & \netmfOne & \netmfTwo & \netmfTwo & Default \\
         \bottomrule
    \end{tabular}
    }
\end{table}

\subsection{Complexity Analysis} Let $\complexityGroupings$ be the time complexity of the node grouping (\sOne). 
Computing the expected number of new edges in each cell (and variance), directly from the observed links, (\sTwo) is $\order(\numEdges)$. 
The complexity of searching equivalence classes (\sThree) comes from hashing each node in the decomposition $\order(\numHashesMax)$ times, and finding the $\target_i$ closest pairs in the $\order(\target_i)$ pairs that land in the same bucket: $\sum_{\memberPartEl \in \memberPart} \order(|\decompX \cup \decompY|\numHashesMax + \target_i) = \order(\numNodes \numHashesMax + \budget)$.
This assumes that we do not encounter unrealistic scenarios, e.g., the embeddings being equivalent and hence inseparable, and that $\numHashesMax$ is set large enough that the volume of tree leaves is not asymptotically larger than $\order(\target_i)$. 
Adding from the global pool (\sFour) takes $\order(\budget)$ time, since $|\globalPool| = \order(\budget)$ and can be maintained in sorted order (similar to the merge in merge sort). Thus, the total time complexity is $\order(\complexityGroupings + \numEdges + \numNodes \numHashesMax + \budget)$.

\label{sec:appendix}

\end{document}